\newtheorem{example}{Example}
\newtheorem{theorem}{Theorem}
\newtheorem{lemma}[theorem]{Lemma}
\newtheorem{definition}[theorem]{Definition}
\newcommand{\AW}{W_{\mathsf{A}}}
\newcommand{\M}{\mathcal{M}}
\newcommand{\opt}{\mathrm{opt}}
\newcommand{\PWin}{P}
\newcommand{\R}{\mathbb{R}}
\newcommand{\RV}{\mathsf{RV}}
\newcommand{\RW}{W_{\mathsf{R}}}
\newcommand{\SW}{\mathsf{SW}}
\newcommand*{\QED}{\null\nobreak\hfill\ensuremath{\blacksquare}}%
\newcommand{\wopt}{\mathrm{wopt}}
\newenvironment{proof}{{\noindent \it Proof.}}{\hfill $\blacksquare$\par}
\title{Multi-unit Auction over a Social Network}
\author[1]{Fang Yuan}
\author[1]{Mengxiao Zhang\thanks{Corresponding author. Email: mengxiao.zhang@uestc.edu.cn}}
\author[2]{Jiamou Liu}
\author[1]{Bakh Khoussainov}
\author[1]{Mingyu Xiao}
\affil[1]{School of Computer Science and Engineering, University of Electronic Science and Technology of China, China}
\affil[2]{School of Computer Science, The University of Auckland, New Zealand}
\date{}
\begin{document}
\maketitle

\begin{abstract}
Diffusion auction is an emerging business model where a seller aims to incentivise buyers in a social network to diffuse the auction information thereby attracting potential buyers. We focus on designing mechanisms for multi-unit diffusion auctions. Despite numerous attempts at this problem, existing mechanisms either fail to be incentive compatible (IC) or achieve only an unsatisfactory level of social welfare (SW). Here, we propose a novel graph exploration technique to realise multi-item diffusion auction. This technique ensures that potential competition among buyers stay ``localised'' so as to facilitate truthful bidding. Using this technique, we design multi-unit diffusion auction mechanisms MUDAN and MUDAN-$m$. Both mechanisms satisfy, among other properties, IC and $1/m$-weak efficiency. We also show that they achieve optimal social welfare for the class of rewardless diffusion auctions. While MUDAN addresses the bottleneck case when each buyer demands only a single item, MUDAN-$m$ handles the more general, multi-demand setting. We further demonstrate that these mechanisms achieve near-optimal social welfare through experiments.
\end{abstract}

\section{Introduction}
Online social networks such as Tiktok, Twitter, and Temu not only enhance our social connectivity, but also provide new business opportunities: A user is able to act as a seller on an online social network, launching sales campaigns through the virtual space \cite{park2012social,zhang2017online}.
Unlike traditional campaigns, a seller in this virtual market could leverage the social network to diffuse information.
By implementing an appropriate marketing strategy, sales information passed to only a few initial individuals may trigger widespread dissemination, reaching a large cohort of potential buyers. Efforts have thus focused on designing mechanisms, termed {\em diffusion auctions}, that incentivise buyers to reveal not only their hidden valuations, but also their social connections to that sales information diffuses in the network \cite{guoemerging,roughgarden2016twenty}.

Diffusion auction differs from traditional auction designs in many aspects.
First, classical tools such as Myerson's lemma no longer apply to incentive compatibility (IC) when the buyers are allowed to strategically declare both their valuations and social connections \cite{guoemerging}.
Then, even though the generic VCG mechanism can be conveniently extended to a diffusion auction, extreme cases exist that result in a large negative revenue for the seller \cite{li2017mechanism}.
Last, unlike the traditional auction designs, for diffusion auction no mechanism would simultaneously satisfy IC, individual rationality (IR), non-deficit (ND), as well as optimal social welfare \cite{takanashi2019efficiency}.
The fundamental challenge in designing diffusion auctions is to mitigate the intrinsic conflict between the seller's desire to attract more participants to the auction, and buyers' wish to lower competition. Namely, by diffusing auction information to neighbours, a buyer may increase the chance of being out-bidded by others as more buyers may join the auction.
Hence new ideas and  tools must be developed for designing diffusion auctions.

In recent years, numerous studies have proposed diffusion mechanisms for {\em single-unit auction}, i.e., where the seller has only one item to sell
\cite{li2018customer,li2019diffusion,zhang2020incentivize,zhang2020redistribution}.
For example,  the IDM mechanism -- one of the starting points of this field \cite{li2017mechanism} -- achieves IC using the notion of {\em  critical buyers}, individuals who have the ability to alter the level of competition, and rewarding  critical buyers for their losses due to information diffusion.
However, these mechanisms do not have guarantee on efficiency. In contrast, GRP mechanism \cite{lee16mechanism} achieves efficiency under a weakened form of IC.
Later, FDM \cite{zhang2020redistribution} and NRM \cite{zhang2020incentivize} focus on redistribution issues while layered and recursive DPDMs \cite{jia2023incentivising} consider privacy issues.
Moving beyond single-unit case, {\em multi-unit auctions} study cases when the seller has multiple (homogeneous) items to sell.
One would hope that this case, being a natural generalisation of the single-unit counterpart, could be addressed using  mechanisms similar to IDM.
However, repeated attempts have failed to satisfy the crucial IC property:
(1) The GIDM mechanism, proposed in \cite{zhao2018selling}, determines the allocation of items and rewards using critical buyers.
This mechanism, however, was pointed out to violate the IC property \cite{takanashi2019efficiency}. See App. A in our full paper. 
(2) The subsequent DNA-MU mechanism, proposed in \cite{kawasaki2020strategy}, also utilises critical buyers while further leveraging a priority order based on buyers' distances from the seller. Unfortunately, this mechanism is once again shown to be not IC \cite{guo2022combinatorial}. See App. B. These failures attest the importance and difficulty of finding a truthful multi-unit diffusion mechanism. 

Remarkably, two recent mechanisms for multi-unit diffusion auction have claimed to be truthful. First, the  SNCA mechanism \cite{xiao2022multi} extends the classical clinching auction to the social network context. Similar to DNA-MU, the mechanism grants a buyer who is closer to the seller in the social network a higher priority when determining the allocation of items.
This mechanism, however, relies on the buyers' budgets which is not available in the standard multi-unit diffusion auction. Then, the LDM-Tree mechanism \cite{liu2022diffusion} applies a layer-based iterative allocation process, 
where buyers in the same layer have equal distance to the seller.
Essentially, the LDM-Tree alleviates competition by restricting information diffusion. This significantly limits the achieved social welfare.
%
%
Moreover, as the algorithm uses certain feature about the social network which may not be known {\em a priori}, the mechanism cannot be applied to the general setting of diffusion auction. See App. C.
All these suggest that {\em the problem of designing reasonable mechanisms for multi-unit diffusion auctions
is far from being settled}.

\smallskip

\noindent {\bf Contribution.} In this paper, we focus on designing a reasonable multi-unit diffusion auction.
{\bf First,} we introduce a technique for realising multi-unit diffusion auction which iteratively explores the social network starting from the seller $s$.
At each iteration, a part of the network is {\em explored}. The mechanism chooses a {\em winner} from the explored buyers while {\em exhausting} some other buyers. The winner and exhausted buyers are incentivised to diffuse the auction information allowing more buyers to be explored.
 We design a mechanism, named MUDAN, that can be embedded into this framework. MUDAN allocates an item to winners during the graph exploration ``on the fly''. It is IC, IR, ND, non-wasteful (NW), while satisfying a weakened version of social welfare optimisation. Moreover, the optimality ratio of MUDAN is tight for any truthful diffusion auction that incentivises buyers without using reward. See Section~\ref{sec:mudan}. 
{\bf Then,} as our mechanisms are defined for {\em single-demand} multi-unit diffusion auction, where each buyer is assumed to demand only one item, we extend the study to {\em multi-demand} multi-unit diffusion auction. We present a reduction from multi-demand multi-unit diffusion auction to the single-demand counterpart. Thus MUDAN can be generalised to the multi-demand case and satisfy all mentioned properties. See Section~\ref{sec:multi}.
{\bf Last, } since several priority scores of buyers -- a key ingredient of the mechanism -- may be defined using a number of traversal schemes, our focus is on evaluating the effect of these traversal schemes to the auction outcomes. In particular, we propose {\em new-agent-based} selection and compare it against other possible schemes over three real-world social networks. We demonstrate that under reasonable valuation models, (i) our mechanism significantly outperforms the benchmarks achieving near-optimal social welfare and revenue, and (ii) new-agent traversal achieves the highest performance in terms of both metrics among all tested traversal schemes. See Section~\ref{sec:exp}.
We summarise the highlights of our achievements:
\begin{itemize}[leftmargin=*]
    \item A technique for realising diffusion auction that iteratively explores the social network,
     and a new truthful multi-unit diffusion auction MUDAN for the single-unit case.
    \item A reduction from multi- to single-demand multi-unit diffusion auction which preserves the mechanism properties and a multi-unit diffusion auction MUDAN-$m$ for the multi-unit case.

    \item Experimental result demonstrating that MUDAN-$m$ achieves near-optimal social welfare and revenue with the new-agent traversal scheme.
\end{itemize}

\section{Model and problem formulation}\label{sec:model}

We present our model for {\em single-demand multi-unit} diffusion auction which was addressed by GIDM \cite{zhao2018selling} and DNA-MU \cite{kawasaki2020strategy}. The more general case of {\em multi-demand multi-unit} diffusion auction will be discussed in Section~\ref{sec:multi}.
Our model consists of the following:
\begin{itemize}[leftmargin=*]
    \item a seller, $s$, has $m\geq 1$ homogeneous items to sell.
    \item $n$ buyers $B=\{1,2,\ldots,n\}$. Each buyer $i\in B$ demands one item and attaches a valuation $v_i\in \R_+$. We often call buyers or the seller the {\em agents} of the network.
    \item a social network, represented as a directed graph $G=(B\cup \{s\}, E)$ on the agents, where the edge set $E\subseteq (B\cup \{s\})^2$ represents social connections between agents. The {\em neighbour set} of an agent $i\in V$ is $r_i \coloneqq \{j\in B \mid (i,j)\in E\}$. In particular, $r_s$ is the set of all neighbours of $s$. We assume that all buyers $v$ are reachable from $s$ in $G$ via paths.
\end{itemize}

We assume that information regarding the auction is not publicly known and the seller relies on buyers to spread this information to attract potential buyers. Initially, the auction information only reaches buyers in $r_s$. During an auction, the buyers who have the auction information are asked to report their neighbours and valuations. Formally, for buyer $i$:
\begin{itemize}[leftmargin=*]
    \item the {\em true profile} $\theta_i\coloneqq (v_i,r_i)$ is private to the buyer $i$ only;
    \item the {\em reported profile} $\theta'_i\coloneqq (v'_i,r'_i)$ where $v'_i\in \R_+$ and $r'_i \subseteq B$ are the {\em reported} valuation and neighbour set by buyer $i$.
\end{itemize}
The reported profile $\theta'_i$ does not have to be $\theta_i$.  The idea is that the buyer $i$ might try to benefit from the auction by strategically reporting $\theta'_i$. By reporting $r'_i$, buyer $i$ diffuses the auction information to all $j\in r'_i$. Following standard convention \cite{li2017mechanism},  we assume that $r_i'\subseteq r_i$.  Some buyers may not be able to participate in the auction had $i$ misreported her neighbours (i.e., $r'_i \subsetneq r_i$).

Fix the true profiles $\theta\coloneqq (\theta_1,\ldots,\theta_n)$. The {\em global profile} is  the reported profiles of all buyers $\theta'\coloneqq (\theta'_1,\ldots,\theta'_n)$. Given $\theta'$, we build the following directed graph, we call the {\em profile graph} $G_{\theta'}$: The nodes are $V_{\theta'}\coloneqq \{s\}\cup B$; put a directed edge  from $i$ to $j$ in the edge set $E_{\theta'}$ if $j\in r'_i$.
A buyer $i$ is \emph{reachable} if there is a path from the seller $s$ to $i$ in this directed graph.
Only reachable buyers can get the auction information.
Technically, any buyer $j$ that is not reachable should not have a reported profile, but for convenience we assume that they have the {\em silent report} $(v_j', r_j')$ where $v_j'=0$, $r_j'=\varnothing$, indicating the agent $j$'s absence from the auction.

Given a global profile $\theta'$, a mechanism returns payment and allocation rules to buyers in $G_{\theta'}$. Let $\Theta$ denote the set of possible global profiles.
\begin{definition} \label{def:M}
A {\em mechanism} $\M$ consists of two functions $(\pi(\cdot),p(\cdot))$, where the mapping $\pi\colon \Theta\to \{0,1\}^{n}$ is the {\em allocation rule} and $p\colon \Theta\to \R^{n}$ is the {\em payment rule}.
For a global profile $\theta'$, the {\em allocation result} $\pi(\theta')$ is written as $(\pi_1(\theta'),\ldots,\pi_n(\theta'))$ and the {\em payment result} $p(\theta')$ as $(p_1(\theta'),\ldots,p_n(\theta'))$.
\end{definition}
For buyer $i\in B$, when  $\pi_i(\theta')=1$, $i$ {\em wins} an item by paying $p_{i}(\theta')$ and is thus a {\em winner}.  When $\pi_i(\theta')=0$, $i$ gets no item. The value $p_i(\theta')$ can be either positive or negative, denoting either cost or reward of buyer $i$, respectively. When the context is clear, we write $\pi_i$ for $\pi_i(\theta')$ and $p_i$ for $p_i(\theta')$.

An ideal mechanism should meet a number of requirements: First, it should incentivise buyers to participate in the auction, and truthfully report their neighbours and valuations. Then, it should maximally allocate items without causing deficit to the seller. Last, it should achieve a target level of social welfare. To formally define these properties, we introduce the following notions:
\begin{itemize}[leftmargin=*]
\item The {\em utility} $u_i(\theta')$ of the buyer $i$ is defined as $v_{i} \pi_{i}-p_{i}$.
\item The {\em social welfare} $\SW(\theta')$ of the mechanism $\M$ is the sum of the utilities of all the agents, i.e., $\sum_{i=1}^n v_{i} \pi_{i}$.
\item The {\em optimal social welfare} $\SW_{\opt}$ is the sum of the top-$m$ valuations in $\theta$.
\item The {\em revenue} $\RV(\theta')$ is the sum of the payment of all buyers, i.e., $\sum_{i=1}^n  p_{i}$.
\end{itemize}
In the next definition, let $\theta'_{-i}\coloneqq (\theta'_1,\ldots,\theta'_{i-1},\theta'_{i+1},\ldots,\theta'_n)$ denote the profiles of all buyers but $i$.

\begin{definition} \label{def:ic}
Let $\M$ be a mechanism.
\begin{enumerate}[leftmargin=*]
\item $\M$ is {\em incentive compatible} (IC) if for any buyer reporting truthfully is a dominant strategy: for all $i\in B$, all global profiles $\theta'$ and $\theta''$, we have $u_i(\theta_i, \theta_{-i}')\geq u_i(\theta_i'',\theta_{-i}'')$ \footnote{As $r_i''$ may be different from $r_i'$, some agents $j$ who are reachable in $G_{\theta'}$ may become unreachable had we replace $\theta'_i$ with $\theta''_i$. $\theta''_{-i}$ is obtained from $\theta'_{-i}$ by replacing $\theta'_j$ with the silent profile for all such agents $j$.}.
\item $\M$ is {\em individually rational} (IR) if any buyer by reporting truthfully receives  non-negative utility.
\item $\M$ is {\em non-deficit} (ND) if for any global profile $\theta'$, the revenue is non-negative, i.e., $\RV(\theta')\geq 0$.
\item $\M$ is {\em non-wasteful} (NW) if all items are allocated to buyers (up to the number of reachable buyers), i.e., for any global profile $\theta'$, $\sum_{i \in V_{\theta'} \backslash \{s\}}\pi_i(\theta') = \min\{m,|V_{\theta'}|-1\}$.
\item $\M$ is {\em efficient}  if it achieves optimal social welfare, i.e., for any $\theta'$, $\SW(\theta')=\SW_{\opt}$.
\end{enumerate}
\end{definition}
Def.~\ref{def:ic} lays out ideal properties for individuals (e.g., IC and IR) and for the entire network (e.g. ND and NW). For standard auctions (without social network), mechanisms are expected to be efficient. This, however, is impossible for diffusion auctions as no diffusion auction mechanism can simultaneously satisfy IC, IR, ND and efficiency \cite{takanashi2019efficiency}. In subsequent sections, we present a new multi-unit diffusion auction mechanism. 

\section{A New Multi-Unit Diffusion Auction}\label{sec:mudan}

\subsection{Diffusion auction by graph exploration} \label{sec:idea}

\noindent {\bf Earlier mechanisms.} We first define a generic technique for multi-unit diffusion auction. Our design draws lessons from  earlier attempts to the problem.

GIDM \cite{zhao2018selling} and DNA-MU \cite{kawasaki2020strategy} are two prominent mechanisms defined for multi-unit diffusion auction.
Both of these mechanisms fail to ensure the IC property as they potentially grant a buyer the opportunity to manipulate the auction outcome: (1) The mechanisms select winners using a tree structure, called the {\em diffusion critical tree}, which encodes information flow in the graph $G_{\theta'}$.
(2) A misreport by a node down a path (e.g. $f$ in Figure~\ref{fig:single}) may affect the decisions of the mechanism made for nodes that are higher up in the tree (e.g. $a$ in Figure~\ref{fig:single}). (3) This changes the level of competition {\em globally}, which in turn presents an unfair advantage to the untruthful node. Appendix A and Appendix B contain detailed descriptions of these mechanisms along with proofs that they are not IC.
To mitigate the problem above, we need therefore to (a) confine our decisions to buyers within a local area of the social network, and (b) mitigate the competitions within this local area so that a buyer cannot affect other parts of the network. We call this idea {\em competition localisation}.

\vspace{-0.3cm}

\begin{figure}[!htbp]
    \centering
    \includegraphics[width=0.3\textwidth]{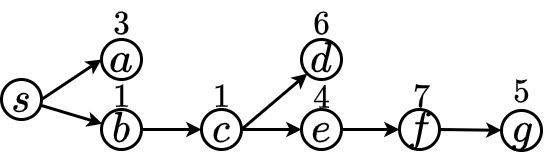}
    \caption{A social network with a seller $s$ and seven buyers (shown in circles). The numbers are the valuations of the circled buyers. GIDM and DNA-MU both fail to ensure truthful reporting here.}
    \label{fig:single}
\end{figure}

\vspace{-0.3cm}

The {\em LDM-Tree} mechanism \cite{liu2022diffusion} applies a form of competition localisation. Specifically, it localise competition within each {\em layer} of the diffusion critical tree, where layer $L_i$ contains agents whose distance from the seller is $i$. The auction runs several rounds. In round $i$, the mechanism only considers nodes in $L_i$ and those nodes in $L_{i+1}$ that do not pose a potential competition to nodes in $L_i$. However, 
LDM-Tree has an obvious flaw:
it severely restricts information diffusion. For instance, if buyers in $L_1$ win all items in the first-layer auction, the outcome of the auction would coincide with the standard VCG mechanism applied {\em only to neighbours of $s$}. This defeats the purpose of diffusion auction.
It will turn out that when applied to single-unit auction (i.e., $m=1$), LDM-Tree may produce outcomes with arbitrarily inferior social welfare than our MUDAN mechanism.
App. C contains detailed descriptions and discussions of LDM-Tree.

\smallskip

\noindent {\bf The generic graph exploration mechanism.} We apply a different form of competition localisation. The idea is to explore the graph $G_{\theta'}$ from the seller $s$, iteratively building a set of {\em explored buyers}. At each iteration, competition is localised to within the explored buyers: 
A {\em winner} is chosen from the explored buyers while some other buyers are {\em exhausted}. The winner and exhausted buyers are incentivised to report their neighbours, which enables more nodes to be explored. Below we explain terms used in a given iteration.

\begin{itemize}[leftmargin=*]
\item  {\bf Explored buyers $A$}: Initially, the set of explored buyers $A=r_s$, i.e., neighbours of $s$. Then at each iteration the set $A$ is updated through  {\em exhausted} and {\em winner} agents (introduced below) using the following procedures:
\vspace{-0.2cm}
\begin{itemize}
    \item Repeatedly adding reported neighbours of exhausted agents until no more buyer can be added.
    \vspace{-0.2cm}
    \item Adding the reported neighbours of the chosen winner.
\end{itemize}

\item {\bf Potential winner set $\PWin$}: At the given iteration, a buyer $i\in A$ is a {\em potential winner} if $i$ is already selected as a winner, i.e., $i\in W$, or may be selected as a winner in the future. The exact definition of $\PWin$ depends on the mechanism and will be made clear in the next sections.

\item {\bf Exhausted agent}: At the given iteration, a buyer in $A\setminus \PWin$ is called an {\em exhausted agents}. The mechanism will ensure that an exhausted agent stays exhausted.


\item {\bf Priority $\sigma_i$}: The algorithm uses {\em priority scores} $\sigma_i$, $i\in \PWin$, to select a winner.
The priority $\sigma_i$ of agent $i$ must satisfy the following:
{\em The value of $\sigma_i$ should be independent of $v_i$ and does not decrease as $|r_i'|$ increases}. A straightforward $\sigma_i$ that meets this condition is $\sigma_i\coloneqq |r'_i|$.\footnote{We introduce other possible $\sigma_i$ and evaluate them empirically in Section~\ref{sec:exp}.}

\item {\bf Winner set $W$}: The buyer with the highest priority score in $\PWin$ is selected as the winner and is added to the set $W$.

\item {\bf Tentative payment $\hat{p}_w$}: When a winner $w$ is selected by the mechanism, a {\em tentative payment} $\hat{p}_w$ is assigned. The tentative payment $\hat{p}_w$ will be used to determine the payment $p_w$ of $w$. The exact definition of $\hat{p}_w$ and $p_w$ will be made clear for each mechanism.


\item {\bf Termination condition}: Terminate the graph exploration if all explored buyers are either winners or exhausted, i.e., $P\setminus W \neq \varnothing$.
\end{itemize}

\begin{algorithm}[htb!]
    \caption{The generic graph exploration mechanism}
    \label{alg:framework}
    \begin{algorithmic}[1]
        \State Initialise $W\leftarrow \varnothing$, $A\leftarrow r_s$ 
        \State {\em Initialise $\PWin$}
        \While{$P\setminus W \neq \varnothing$}\Comment Termination condition \label{line:terminate}
            \While{$A$ contains an unmarked agent $i\in W\cup (A\setminus \PWin)$} \label{line:iteration}
                \State Update $A\leftarrow A\cup r_i'$, mark agent $i$
                \State {\em Update set $\PWin$}
            \EndWhile
            
            \State {\em Assign a priority $\sigma_i$ to each $i \in \PWin$}
            \State Add agent $w\in \PWin\setminus W$ who has the highest priority in $W$
            \State {\em Record tentative payment $\hat{p}_w$}
        \EndWhile
        \State {\em Determine the allocation and payment results using $W$ and tentative payments} 
    \end{algorithmic}
\end{algorithm}

Given the ingredients above, Alg.~\ref{alg:framework} describes the generic graph exploration mechanism. Our MUDAN can be embedded into this framework. Note that the lines in {\em italic} need to be instantiated.

\subsection{The MUDAN mechanism} \label{sec:mudan-mechanism}

We now describe our MUDAN ({\em Multi-Unit Diffusion Auction with No reward}) mechanism for single-demand multi-unit diffusion auction.  MUDAN implements the generic  mechanism (Alg.~\ref{alg:framework}) as follows: Maintain a variable $m'\geq 0$ which records the remaining number of items to be sold. Initially, $m'\coloneqq m$, and is decremented every time a winner is selected. Hence $m-m'$ buyers are selected as winners. At a given iteration, the mechanism sets the following:

\begin{itemize}[leftmargin=*]

\item {\bf Potential winner set $\PWin$}: Rank the buyers in $A\setminus W$ by their valuations such that $v'_{i_1}\geq v'_{i_2} \geq \ldots$. If $|A\setminus W|\leq m'$, then set $\PWin\coloneqq A$; otherwise, add in $\PWin$ the buyers with the top-$m'$ valuations in $A\setminus W$, i.e., $\PWin  \coloneqq W\cup \{i_1,\ldots, i_{m'}\}$.

\item {\bf Tentative payment $\hat{p}_w$}: When a winner $w$ is selected, set $\hat{p}_w$ as the current $(m'+1)$th highest valuation in $A\setminus W$, i.e., $\hat{p}_w\coloneqq v_{m'+1}$.


\end{itemize}
After the last iteration,  we allocate to each winner $w\in W$ an item, i.e., set $\pi_w\coloneqq 1$, and set $p_w\coloneqq \hat{p}_w$. No other buyer will receive an item and the payment to them is 0.  Table~\ref{tab:MUDAN-single} and Figure 5 (in App. D) provide a run-through example of MUDAN. 

\begin{table}
    \begin{center}   
    \caption{Running MUDAN on the network in Fig.~\ref{fig:single} with $m=4$ assuming all buyers report truthfully. Set the priority  $\sigma_i\coloneqq |r_i|$, where buyers with more neighbours get higher priority. `Iter.' shows the iteration number. `Incr. to $A$' column shows the nodes to be added to $A$ in each iteration. The `$\PWin$' column lists potential winners in descending order of $v_i$. The winners are $b,c,e,f$. }
    \label{tab:MUDAN-single}
    \begin{tabular}{|c|c|c|c|c|c|}
    \hline
    Iter. & $m'$ & Incr. to $A$ &  $\PWin$   & $\pi,p$ \\
    \hline
    1 & 4 & $a,b$ & $a,b $ & $\pi_b=1$, $p_b=0$\\
    \hline
    2 & 3 & $c$ & $a,c$ & $\pi_c=1$, $p_c=0$\\
    \hline
    3 & 2 & $d,e$ & $d,e$ & $\pi_e=1$, $p_e=3$\\
    \hline
    4 & 1 & $f$ & $f$   & $\pi_f=1$, $p_f=4$\\
    \hline
    \end{tabular}
    \end{center}
\end{table}


By the definition of $\PWin$, the algorithm terminates when $m'=0$ (after $m$ iterations).
We now show that MUDAN has desirable properties. The next lemma is straightforward (See Appendix D).
\begin{lemma}\label{lem:MUDAN-IR}
   MUDAN satisfies IR, ND, and NW. \QED
\end{lemma}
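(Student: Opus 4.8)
The plan is to verify each of the three properties separately, since they are structurally independent and all follow from unwinding the definitions of MUDAN's tentative payment and winner selection. Recall that MUDAN runs exactly $m$ iterations, selecting one winner per iteration and decrementing $m'$ from $m$ down to $0$, so $|W| = m$ (capped at the number of reachable buyers), and each winner $w$ pays $p_w = \hat{p}_w = v_{m'+1}$, the $(m'+1)$th highest valuation in $A \setminus W$ at the moment $w$ is selected.

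For \textbf{IR}, I would show that when buyer $i$ reports truthfully and is selected as a winner in the iteration with remaining count $m'$, her payment $p_i = v_{m'+1}$ is the $(m'+1)$th highest valuation in the current $A \setminus W$. Since $i$ is chosen from the potential winner set $\PWin$, which consists of the top-$m'$ valuations in $A \setminus W$, her own valuation $v_i$ is among the top $m'$, hence $v_i \geq v_{m'+1} = p_i$. Therefore $u_i = v_i \pi_i - p_i = v_i - p_i \geq 0$. For a non-winner, $\pi_i = 0$ and $p_i = 0$, so $u_i = 0 \geq 0$. This establishes IR.

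For \textbf{ND}, the revenue is $\RV(\theta') = \sum_{w \in W} p_w$, a sum of tentative payments. Each $p_w = v_{m'+1}$ is a valuation in $\R_+$, hence non-negative, and non-winners contribute $0$. Summing non-negative terms gives $\RV(\theta') \geq 0$ immediately. For \textbf{NW}, I would argue directly from the loop structure: the outer loop runs until $\PWin \setminus W = \varnothing$, which by the definition of $\PWin$ happens precisely when $m' = 0$ (if there are at least $m$ reachable buyers) or when all reachable buyers have been exhausted into $W$. In the first case exactly $m$ items are allocated; in the second, all $|V_{\theta'}| - 1$ reachable buyers win. Thus $\sum_i \pi_i = \min\{m, |V_{\theta'}| - 1\}$, which is exactly NW.

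The main obstacle, such as it is, lies in \textbf{NW}: one must confirm that the exploration process does not get stuck, i.e., that whenever $m' > 0$ and unexplored reachable buyers remain, the algorithm can always add a new winner and that the set $A$ eventually expands to cover all reachable buyers. This requires checking that winners and exhausted agents together diffuse enough to reach every node on a path from $s$ — a consequence of the assumption $r_i' \subseteq r_i$ together with the exploration rule that adds neighbours of both winners and exhausted agents. Since the lemma is asserted to be straightforward, I expect each verification to reduce to a one- or two-line argument once the bookkeeping of $m'$ and $A \setminus W$ is made explicit, so I would keep the exposition brief and defer the detailed diffusion-reachability check to the appendix.
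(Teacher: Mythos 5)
Your proposal is correct and follows essentially the same route as the paper's own proof: IR from the winner's valuation being among the top-$m'$ and hence at least the $(m'+1)$th highest valuation used as payment, ND from all payments being non-negative (no rewards), and NW from the termination condition forcing either $m$ winners or all explored buyers to be winners. Your added remark that one should verify the exploration actually reaches every reachable buyer when $|W|<m$ is a reasonable point the paper glosses over in this lemma (it is effectively handled by its Lemma~\ref{lem:local}), but it does not change the argument.
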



\begin{lemma}\label{lem:MUDAN-IC}
The MUDAN mechanism satisfies IC.
\end{lemma}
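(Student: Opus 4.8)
The plan is to fix an arbitrary buyer $i$ and the reports $\theta'_{-i}$ of all others, and to split the joint deviation over $(v'_i,r'_i)$ into two claims: \textbf{(i)} for every fixed reported neighbour set $r'_i\subseteq r_i$, reporting the true valuation is optimal, and the price charged to $i$ upon winning is a threshold $c_i(r'_i)\ge 0$ that does not depend on $v'_i$; and \textbf{(ii)} among all $r'_i\subseteq r_i$, the full report $r_i$ minimises this threshold. Granting (i), buyer $i$'s best attainable utility under report $r'_i$ is $\max\{v_i-c_i(r'_i),0\}$, which is non-increasing in $c_i(r'_i)$; so (ii) forces truthful reporting $(v_i,r_i)$ to be a dominant strategy, which is exactly IC.

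First I would record a locality invariant. The execution up to the iteration $t_0$ at which $i$ first enters $A$ is independent of $i$'s report: before $i\in A$ buyer $i$ never belongs to any $\PWin$, is never selected, and its neighbours $r'_i$ are revealed only once $i$ is exhausted or wins, so they cannot be explored earlier. Hence $t_0$ and the state $(A,W,m')$ at that moment are identical for all $(v'_i,r'_i)$. For claim (i) I would then run two copies with valuations $v'_i<v''_i$ (same $r'_i$) in lockstep and prove by induction that, as long as $i$ is a live potential winner in the lower copy, both copies share the same $A$, $W$, and $\PWin$: only $i$'s valuation differs and $i$ sits in the top-$m'$ of $A\setminus W$ in both, so the top-$m'$ set, the exhausted set, the valuation-independent priorities, and hence the selected winner all agree, while $i$'s neighbours stay hidden in both. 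Thus either $i$ is selected in both copies at the same iteration with the same tentative payment (the $(m'+1)$-th valuation is read off the other, identically valued, buyers), or $i$ falls out of the top-$m'$, an event a larger valuation can only postpone.

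To extract the threshold I would track the cutoff $d_t$, the $m'$-th largest valuation among $(A\setminus W)\setminus\{i\}$ at iteration $t$; buyer $i$ is live precisely while $v'_i>d_t$, and if selected at iteration $t_1$ pays exactly $d_{t_1}$. The load-bearing computation is that $d_t$ is non-decreasing along the live trajectory: deleting the selected winner, who lies in the top-$m'$, while decrementing $m'$ leaves the cutoff unchanged, and newly explored buyers can only raise it. Therefore $i$ wins iff $v'_i>\max_{t\le t_1} d_t=d_{t_1}$ and then pays $d_{t_1}$, giving (i) with $c_i(r'_i)=d_{t_1}$ (if $i$ is never selected on the live trajectory it never wins and the claim is vacuous). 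For claim (ii) I would couple the live trajectories under $r'_i$ and under the larger $r_i$: the only change is $i$'s priority $\sigma_i$, which is non-decreasing in $|r'_i|$, so while neither run has yet selected $i$ the states coincide (all other priorities are equal and $i$'s extra neighbours remain hidden), and the higher priority can only make $i$ be selected at an earlier iteration. Since $d_t$ is non-decreasing and agrees on the common prefix, an earlier selection yields a smaller $d_{t_1}$, i.e. $c_i(r_i)\le c_i(r'_i)$.

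\textbf{Main obstacle.} A change in $i$'s valuation or priority alters which buyer is marginal, hence who is exhausted and whose sub-network is explored, so \emph{a priori} the two executions diverge and no threshold need exist. The crux---and the whole point of competition localisation---is that while $i$ is a live potential winner its own neighbours stay hidden, so $i$'s report reaches the run only through its rank within the top-$m'$ and through its priority; together with the monotonicity of $d_t$ both channels are controlled, and any divergence happens only after $i$'s outcome and price are already locked in. I expect verifying this monotonicity and the lockstep invariant under the exhaustion-driven exploration to be the delicate part; tie-breaking between equal valuations or priorities is handled by fixing a consistent order.
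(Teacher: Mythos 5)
Your proof is correct, but it is organised quite differently from the paper's. The paper argues iteration-by-iteration with a direct case analysis: for each role a buyer can play at a given iteration (current winner, member of $\PWin\setminus W$, or exhausted) it enumerates the possible positions of the misreported valuation relative to $v'_{m'+1}$ (resp.\ $v'_{m'}$) and checks that utility never increases; neighbour misreports are handled separately by noting that hiding neighbours cannot raise $\sigma_i$. You instead build an explicit posted-price characterisation: a lockstep/coupling invariant showing the execution is independent of $i$'s report until $i$ enters $A$ and, while $i$ is a live potential winner, depends on $i$ only through membership in the top-$m'$ and through $\sigma_i$; a monotone-cutoff lemma ($d_t$ non-decreasing, which holds because the removed winner always lies in the top-$(m'-1)$ of $(A\setminus W)\setminus\{i\}$ so deleting it and decrementing $m'$ preserves the cutoff, and newly explored buyers can only raise it); and the conclusion that $i$ faces a threshold $c_i(r'_i)$ independent of $v'_i$ and minimised at $r'_i=r_i$. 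What your route buys is twofold. First, it cleanly handles the joint deviation over $(v'_i,r'_i)$, which the paper's two separate arguments only cover implicitly. Second, it closes a gap in the paper's neighbour-misreport case: the paper asserts that if hiding neighbours costs $w$ the highest priority then "she loses some items," but in fact she may simply be selected at a later iteration; your monotonicity of $d_t$ is exactly what is needed to show that a later selection can only mean a weakly higher payment, hence weakly lower utility. The price of your approach is the extra bookkeeping (tie-breaking, the corner case $|A\setminus W|\le m'$ where the cutoff is $0$), which you correctly flag as the delicate part.
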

\begin{proof} We prove two statements: {\em 1. A buyer cannot benefit from misreporting her valuation}. Our argument is the following.
Consider an iteration and suppose that $w$, if reporting her profile truthfully, will be selected a winner. We prove that $w$ cannot benefit from misreporting her valuation:
\begin{itemize}[leftmargin=*]
\item If $v_{m'+1}' \leq  v_w'< v_w $ or $v_{m'+1}'\leq v_w < v_w'$,  then $w$ is allocated an item,  pays the $(m'+1)$th highest valuation,  and the utility $u_w((v_w',r_w),\theta_{-w})= u_w((v_w,r_w),\theta_{-w})$.
\item If $w$ reports  $v_w'\leq v_{m'+1}' \leq v_w$, then $w$ loses the item and her utility is $0$.
\end{itemize}
Now consider another buyer  $i\in \PWin \setminus W$, $i\neq w$. We prove that $i$ also cannot benefit from misreporting her valuation:
\begin{itemize}[leftmargin=*]
    \item if $i$ reports valuation $v_i'$ such that $v_{m'+1}' \leq v_i \leq v_i'$ or $v_{m'+1}' \leq v_i'<v_i$, $i$ would still be a potential winner in this iteration, her priority would stay unchanged. 
    \item If $i$ reports valuation $v_i' < v_{m'+1}' < v_i $, she would not be a potential winner and her utility is $0$.
\end{itemize}
Lastly, consider a buyer  $i\in A \setminus \PWin$. We prove that $i$ cannot benefit from misreporting her valuation:
\begin{itemize}[leftmargin=*]
    \item if $i$ reports her valuation $v_i'$ such that $v_i < v_{m'}' \leq v_i'$, then: (i) If she has the highest priority, then her utility is $u_i((v_i',r_i),\theta_{-i})=v_i-v_{m'+1}' <0 =u_i((v_i,r_i),\theta_{-i})$. (ii) Otherwise, her utility remains $0$.
    \item If she reports $v_i<v_i'<v_{m'}'$ or $v_i'<v_i < v_{m'}'$, her utility remains $0$.
\end{itemize}

{\em 2. A buyer cannot benefit from misreporting her neighbours}.
Our argument is the following. Take $i\in A$. If $i$ hides any neighbour, her priority cannot increase.
Consider winner $w$ in a certain iteration. If $w$ hides some of her neighbours and her priority is still the highest, her allocation and payment do not change, so her utility $u_w((v_w,r_w'),\theta_{-w})=u_w((v_w,r_w),\theta_{-w})$. If her priority is not the highest, she loses some items, her utility decreases, i.e., $u_w((v_w,r_w'),\theta_{-w})< u_w((v_w,r_w),\theta_{-w})$.
Now consider agent $i\in A\backslash\{w\}$. If $i$ hides any neighbour, $i$'s priority would not increase and hence, she is still not allocated an item and $u_i((v_i,r_i'),\theta_{-i})=u_i((v_i,r_i),\theta_{-i})=0$.
\end{proof}

\paragraph*{\bf Social welfare.}
We now analyse the social welfare achieved by MUDAN. Note that MUDAN sets the payment $p_i\geq 0$ for any buyer $i\in B$.
This condition means that critical buyers are not incentivised to diffuse the auction information using reward, and thus we call it {\em no-reward} condition. We will show that MUDAN achieves the highest possible social welfare guarantee among IC diffusion auctions with no-reward.
%
Let $w^*\in W$ denote the winner selected in the final iteration. We say that $w^*$ is {\em critical} for a buyer $i$ if all paths from $s$ to $i$ pass through $w^*$.  The next lemma characterises buyers that are explored by the mechanism.

\begin{lemma}\label{lem:local}
A buyer $i$ is explored if and only if $w^*$ is not critical for $i$.
\end{lemma}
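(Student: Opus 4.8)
The plan is to reduce the lemma to a single structural claim about the exploration: at termination the explored set $A$ is exactly the set of buyers reachable from $s$ in $G_{\theta'}$ once all outgoing edges of $w^*$ are deleted. Writing $G^-$ for this pruned graph, I would show $A=\{i : i \text{ is reachable from } s \text{ in } G^-\}$, and then observe that for $i\neq w^*$ an $s$--$i$ path survives in $G^-$ precisely when it never leaves $w^*$, i.e.\ precisely when some $s$--$i$ path avoids $w^*$; this is exactly the statement that $w^*$ is not critical for $i$. The degenerate endpoint $i=w^*$ is handled by the convention that a critical buyer must be distinct from the buyer it is critical for, so that $w^*$ is (correctly) both explored, since a winner lies in $A$, and not critical for itself.

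The heart of the argument is the following claim, which I would isolate as the key step: at termination every explored buyer contributes its reported neighbours to $A$ except $w^*$; formally $r_s\subseteq A$ and $r'_i\subseteq A$ for every $i\in A\setminus\{w^*\}$, while $r'_{w^*}$ need not be contained in $A$. To prove it, note first that since MUDAN terminates exactly when $m'=0$, at that point $\PWin=W$, so $A\setminus\PWin=A\setminus W$ and every explored buyer is either a winner or exhausted. An exhausted buyer contributes its neighbours because the update rule keeps adding the neighbours of exhausted agents ``until no more buyer can be added'', so $A$ is closed under expansion at every exhausted node. A winner $w$ selected in an iteration that is not the last is expanded at the start of the following iteration, as it then lies in $W$ and is processed by the inner loop, so it too contributes its neighbours. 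Finally $w^*$, the winner of the last iteration, was a potential winner in $\PWin\setminus W$ until the moment it was chosen, hence never exhausted and never expanded; and once it is selected the test $\PWin\setminus W=\varnothing$ fires, so the exploration halts before $w^*$'s neighbours are ever added. This gives the claim.

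Given the claim, both directions follow by elementary reachability arguments. For the ``only if'' direction I would maintain the invariant that every buyer placed in $A$ is reachable from $s$ by a walk using no outgoing edge of $w^*$: this holds at initialisation, since the initial members of $A$ are the neighbours $r_s$ of $s$ and $s\neq w^*$, and it is preserved by every expansion, since by the key claim expansions only use outgoing edges of nodes other than $w^*$. Consequently, if $w^*$ is critical for $i$, so that every $s$--$i$ path passes through $w^*$ and hence (as $i\neq w^*$) uses an outgoing edge of $w^*$, then no $w^*$-avoiding walk to $i$ exists and $i\notin A$. For the ``if'' direction, suppose $w^*$ is not critical for $i$ and fix an $s$--$i$ path $s=u_0,u_1,\ldots,u_k=i$ on which $w^*$ does not occur as an intermediate vertex; for $i\neq w^*$ this path avoids $w^*$ entirely. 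A straightforward induction on $j$ shows $u_j\in A$: the base case $u_1\in r_s\subseteq A$ is immediate, and if $u_j\in A$ with $u_j\neq w^*$ then by the key claim $r'_{u_j}\subseteq A$, so $u_{j+1}\in A$. Hence $i=u_k$ is explored.

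The step I expect to be the main obstacle is the key claim, and within it the treatment of buyers that turn exhausted during the very last iteration. A buyer that ceases to be a potential winner only when $m'$ reaches its final value has no ``next iteration'' in which the inner loop could expand it, so one must argue explicitly that the exhausted-expansion rule is applied to closure within each iteration, including the last, before a winner is fixed, rather than being deferred to the following iteration. Establishing that $w^*$ is the \emph{unique} non-expanding node therefore rests on this closure property together with the fact that in the last iteration $\PWin\setminus W$ contains the single buyer $w^*$; I would take care to state the expansion order precisely so that no second un-expanded potential winner can survive to termination.
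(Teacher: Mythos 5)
Your proof is correct in substance but is organised quite differently from the paper's. The paper argues the ``only if'' direction by a one-line induction on the iteration at which $i$ enters $A$, and the ``if'' direction by a minimal-counterexample argument along a shortest $w^*$-avoiding path: the predecessor $j$ of $i$ on that path lies in $A$, and then either $j$ is a winner (so $i$ is explored) or $j$ is eventually exhausted (so $r'_j$, hence $i$, is added to $A$), the latter case being justified by comparing $v_j$ with $v_{w^*}$. You instead isolate a single structural invariant --- at termination $A$ is closed under the map $i\mapsto r'_i$ at every node except $w^*$ --- and derive both directions as pure reachability statements in the graph obtained by deleting $w^*$'s outgoing edges. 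This is a genuinely different decomposition and, in my view, a cleaner one: it removes the valuation comparison $v_j\leq v_{w^*}$ from the argument entirely (the winner-or-exhausted dichotomy at termination is all that is needed), it works directly with the reported neighbour sets $r'_i$ rather than appealing to Lemma~\ref{lem:MUDAN-IC} to identify $r'_j$ with $r_j$, and it makes explicit the degenerate case $i=w^*$, which the paper's definition of criticality leaves ambiguous.

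The one point to be careful about is the closure claim itself, and you have correctly put your finger on it. In Algorithm~\ref{alg:framework} the expansion of exhausted agents at iteration $k$ is performed against the set $\PWin$ from iteration $k-1$, so a buyer who drops out of $\PWin$ at the final update (because the last winner's expansion brought in a higher-valued buyer) has no subsequent iteration in which the inner loop would expand her. Under that literal reading your key claim --- and equally the paper's assertion that ``when $j$ is exhausted, $i$ will be added in $A$'' --- fails for such a buyer; both proofs implicitly rely on the prose semantics that exhausted agents are expanded to closure whenever $\PWin$ is updated, including after the last update. Since the paper's own proof uses the same fact without comment, this is a shared presentational issue rather than a gap specific to your argument, but your proof should state the expansion convention explicitly before invoking the closure property, exactly as you propose to do.
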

\begin{proof}
Suppose $i$ is explored by the mechanism at the $j$th iteration. One can easily prove by induction on $j$ that a path exists from $s$ to $i$ without passing through $w^*$.

Conversely, suppose a path exists from $s$ to $i$ without passing through $w^*$. Let $d_i$ denote the length of the shortest such path. Suppose further $i$ is a node with the smallest $d_i$ that is not explored. Note that $d_i>1$ as all nodes in $r_s$ are explored. Now take the node $j$ that immediately precedes $i$ in the shortest path from $s$ to $i$ without passing $w^*$. Note that $j\in A$ and  $i\in r_j$ by Lemma~\ref{lem:MUDAN-IC}. If $j$ is selected as a winner by the mechanism, then $i$ is explored. Thus $j$ will not be selected as a winner. This will happen only when $v_j\leq v_{w^*}$, which means that $j$ will be exhausted eventually. When $j$ is exhausted, $i$ will be added in $A$. Contradiction.
\end{proof}
Let $B^*$ denote the set of buyers for whom $w^*$ is not critical.


\begin{lemma}\label{lem:weak efficiency}
Suppose a buyer $y\in B$ has a higher valuation than all winners, i.e., $v_y>v_{w}$ for all $w\in W$. Then $y\notin B^*$.
\end{lemma}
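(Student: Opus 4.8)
The plan is to argue by contradiction: assume $y\in B^*$ while $v_y>v_w$ for every $w\in W$. In particular $v_y>v_{w^*}$, and since $y$ strictly beats every winner's valuation, $y$ is itself not a winner. I will show that under these assumptions $y$ would have been available and strictly preferable to $w^*$ in the final iteration, so $w^*$ could not have been the buyer chosen there.

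First I would pin down $w^*$. It is selected in the last iteration, where $m'=1$, so the potential-winner set satisfies $P\setminus W=\{w^*\}$; by MUDAN's rule for $P$ this makes $w^*$ the single highest-valued buyer in $A\setminus W$ at that moment, i.e. $v_{w^*}\ge v_i$ for every explored non-winner $i$ then present in $A$. Next I would show that $y$ is already in $A$ when $w^*$ is selected. Because $y$ is not a winner, Lemma~\ref{lem:local} turns the assumption $y\in B^*$ into the statement that $y$ is explored, i.e. $y$ belongs to the final set $A$ returned by Algorithm~\ref{alg:framework}. The key point is that this final set is exactly the set $A$ present at the instant $w^*$ is chosen: neighbours of a winner are appended only at the start of the following iteration, but the moment $w^*$ enters $W$ we have $P\setminus W=\varnothing$, so the outer loop halts before any further neighbours---of $w^*$ or of any remaining exhausted agent---can be added to $A$. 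Hence no buyer becomes explored after $w^*$ is picked, and therefore $y\in A\setminus W$ at the instant $w^*$ is selected.

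Putting these together, $y$ is an explored non-winner lying in $A\setminus W$ when $w^*$ is chosen, so $v_{w^*}\ge v_y$, contradicting $v_y>v_{w^*}$; hence $y\notin B^*$. I expect the middle step to be the main obstacle: one must argue carefully, from the bookkeeping of Algorithm~\ref{alg:framework}, that no new buyer is explored once the last winner is fixed, so that ``eventually explored'' already means ``explored before $w^*$ is selected.'' The only remaining case to dispose of is the degenerate one in which fewer than $m$ buyers are reachable; there the algorithm allocates to every reachable buyer, so $B^*=W$ and the claim holds vacuously.
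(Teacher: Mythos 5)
Your proof is correct and follows essentially the same route as the paper's: assume $y\in B^*$, use Lemma~\ref{lem:local} to place $y$ in $A\setminus W$ by the final iteration, and conclude that $w^*$ could not then belong to the potential winner set $P$. You are in fact more explicit than the paper on the one delicate point—that no further exploration occurs once the last winner is chosen, so ``eventually explored'' already means ``explored before $w^*$ is selected''—and you also dispose of the degenerate early-termination case, which the paper leaves implicit.
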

\begin{proof}
Take such a buyer $y$ that has the highest valuation. Suppose for a contradiction that a path exists from $s$ to $y$ without passing through $w^*$. By Lemma~\ref{lem:local}, $y$ will eventually be added to $A$. Consider the last iteration before $w^*$ is chosen as the winner. Since $v_y>v_{w^*}$, $w^*$ would not be an element of $\PWin$. Contradiction.
\end{proof}

Lemma~\ref{lem:weak efficiency} describes how MUDAN may fail to achieve optimal social welfare: There exists a buyer $i\in B\setminus B^*$ who has a high (top-$m$) valuation. This motivates us to define the following weakened notion of efficiency.

\begin{definition}\label{def:WAE}
Let $\SW_{\wopt}$ denote the sum of the top-$m$ valuations among buyers in $B^*$.
 A mechanism $\M$ is {\em $\epsilon$-weakly efficient} if for any global profile $\theta'$, we have $\SW(\theta') \geq \epsilon \SW_{\wopt}$.
\end{definition}
Weak efficiency means achieving the highest social welfare among the explored buyers.
By Lemma~\ref{lem:weak efficiency}, MUDAN selects the buyer with the highest valuation from $A$ as a winner, thus achieving $1/m$-weak efficiency. 
The only currently-known IC multi-unit diffusion auction mechanism, LDM-Tree, may produce outcomes for the single-unit case that are arbitrarily inferior than  MUDAN in terms of social welfare. See Appendix C. 

The theorem below summarises the results above.

\begin{theorem}\label{thm:MUDAN}
MUDAN terminates within time $O(n^2+|E|)$, satisfies IC, IR, ND, NW, and $1/m$-weak efficiency. \QED
\end{theorem}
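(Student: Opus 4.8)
The plan is to assemble Theorem~\ref{thm:MUDAN} from the lemmas already established and verify the single remaining claim about the running time. Concretely, IR, ND, and NW follow immediately from Lemma~\ref{lem:MUDAN-IR}; IC follows from Lemma~\ref{lem:MUDAN-IC}; and $1/m$-weak efficiency follows from Lemma~\ref{lem:weak efficiency} together with the definition of $\SW_{\wopt}$ (Def.~\ref{def:WAE}). So the only genuinely new work is the time bound $O(n^2+|E|)$. I would therefore state up front that the correctness properties are inherited from the preceding lemmas, and spend the body of the proof on the complexity analysis.

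For the running time, first I would observe that the outer \textbf{while} loop of Algorithm~\ref{alg:framework} runs exactly $m\le n$ times, since by the MUDAN specification $m'$ is decremented by one per iteration and the loop terminates when $m'=0$. Within a single iteration there are two cost centres. The inner \textbf{while} loop (the exploration step, lines~\ref{line:iteration}) marks each newly exhausted or winner agent and adds its reported neighbours to $A$; since each agent is marked at most once across the \emph{entire} run, the total work spent traversing neighbour lists over all iterations is $O(n+|E|)$ by an amortised/aggregate argument. The second cost centre is maintaining $\PWin$ and selecting the highest-valuation and highest-priority buyers: in each of the $\le n$ iterations this requires scanning or re-ranking the at-most-$n$ buyers currently in $A\setminus W$, contributing $O(n)$ per iteration and hence $O(n^2)$ overall. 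Summing the two gives $O(n^2+|E|)$.

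I would present the amortisation explicitly rather than naively bounding the inner loop by $O(n+|E|)$ per iteration (which would only yield $O(mn+m|E|)$): the key point is that the ``mark agent $i$'' step guarantees no agent is processed twice, so the neighbour-scanning work is charged once per agent/edge globally, not once per outer iteration. I expect this aggregate accounting to be the main subtlety of the proof, since it is where a loose analysis would lose the stated bound; everything else is bookkeeping. I would close by noting that the sorting/selection needed for $\PWin$ and for the tentative payment $\hat p_w=v_{m'+1}$ can be done within the claimed budget (for instance by keeping $A\setminus W$ in a sorted structure, or by a linear selection each iteration), and then invoke the cited lemmas to conclude IC, IR, ND, NW, and $1/m$-weak efficiency, completing the theorem.
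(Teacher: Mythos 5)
Your proposal matches the paper's proof: the correctness properties are delegated to Lemmas~\ref{lem:MUDAN-IR}, \ref{lem:MUDAN-IC}, and \ref{lem:weak efficiency}, and the time bound is obtained exactly as in the appendix --- $O(n+|E|)$ aggregate work for graph exploration (each agent marked once), plus $O(n^2)$ for maintaining the sorted explored set and performing the at-most-$n$ winner selections. Your explicit amortisation argument is just a slightly more careful rendering of the same accounting the paper uses.
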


Lastly, we show that $1/m$-weak efficiency is as good as it can be for IC and ND diffusion auctions with no-reward. 
\begin{theorem}\label{thm:1/m}
For any $m\geq 1$ and any constant $\lambda>0$, there exists profile $\theta$ where no $m$-unit IC and NW diffusion auction with no-reward achieves $(1/m+\lambda)$-weak efficiency. 
\end{theorem}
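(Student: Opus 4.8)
The plan is to construct, for each $m \geq 1$ and each $\lambda > 0$, an explicit family of profiles on which \emph{every} IC no-reward diffusion auction must sacrifice a large fraction of $\SW_{\wopt}$. The key leverage is the interaction between two constraints: the no-reward condition forces every winner to pay a nonnegative price, and IC forces the mechanism to incentivise the buyers who control information diffusion to actually diffuse. The worst case arises when a single buyer sits on a bottleneck through which all high-value buyers must be reached. If the mechanism wants to explore those high-value buyers, it must make the bottleneck buyer willing to diffuse; but with no reward available, the only way to compensate a diffusing buyer is to let her win an item cheaply. This consumes one of the $m$ items on a potentially low-value buyer, and the construction will force this to happen repeatedly.

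Concretely, I would build a ``chain of bottlenecks'' instance. First I would design the network so that the seller's neighbours $r_s$ consist of $m$ buyers, each of whom is a gatekeeper to a distinct high-value subtree, but each gatekeeper has a very small valuation, say $\varepsilon$. Behind each gatekeeper sit buyers with valuation $1$. The set $B^*$ (buyers for whom the last winner is not critical) will contain the high-value buyers, so $\SW_{\wopt}$ is close to $m$. The argument is then that to collect social welfare close to $m$, the mechanism would have to reach and allocate to the deep high-value buyers behind \emph{all} the gatekeepers; but to reach the buyers behind a gatekeeper $g$, IC combined with no-reward requires that $g$ be given positive utility, which (since payments are nonnegative) can only be done by awarding $g$ an item at price $0$. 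I would formalise the claim that any IC no-reward mechanism that explores past a gatekeeper must allocate that gatekeeper an item, pushing the bound: with only $m$ items and $m$ gatekeepers, allocating even one item to a high-value deep buyer requires ``spending'' an item on a gatekeeper first, and in the extremal arrangement the mechanism can secure at most one unit of high value while the rest of the items go to $\varepsilon$-valued gatekeepers.

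The heart of the proof is a local incentive argument showing that a gatekeeper $g$ who is not awarded an item has a profitable deviation of hiding her neighbours: by withholding diffusion she eliminates the competition from the deep high-value buyers, and in the resulting smaller market she herself becomes a winner (or otherwise improves her utility). Establishing this deviation rigorously is the step I expect to be the main obstacle, because it must hold against an \emph{arbitrary} IC no-reward mechanism, not just against MUDAN. The way to handle this is to compare two runs of the same mechanism $\M$: one on the full profile $\theta$ and one on the profile $\theta''$ in which $g$ reports $r_g'' = \varnothing$. In the second run, the deep buyers behind $g$ become unreachable and receive the silent profile, so by IR and the structure of $\M$ the item that would have gone to a deep buyer must be reallocated; I would argue that $g$'s utility in the second run is strictly positive (she wins at a price at most her valuation by IR), whereas if $\M$ gives $g$ zero utility in the first run, this violates IC. Hence any IC mechanism must give each explored-past gatekeeper nonnegative-and-in-fact-positive utility, which under no-reward means an item at price zero.

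Having pinned down that each gatekeeper past whom the mechanism diffuses must consume a free item, I would finish by a counting argument: let $k$ be the number of gatekeepers past whom $\M$ diffuses. Then at least $k$ items are committed to $\varepsilon$-valued gatekeepers, leaving at most $m-k$ items for the high-value region, while the high-value buyers actually reachable lie behind exactly those $k$ gatekeepers, so the realised social welfare is at most $(m-k)\cdot 1 + \text{(gatekeeper contributions)} \le (m-k) + m\varepsilon$, maximised at $k=0$ giving essentially nothing, or at the balancing point giving at most $1 + O(\varepsilon)$ units against $\SW_{\wopt} = m - O(\varepsilon)$. Choosing $\varepsilon$ small enough as a function of $\lambda$ and $m$, the ratio $\SW(\theta')/\SW_{\wopt}$ is forced below $1/m + \lambda$, which is exactly the claim. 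The calibration of $\varepsilon$ and the verification that $B^*$ has the intended value of $\SW_{\wopt}$ are routine once the incentive lemma is in place.
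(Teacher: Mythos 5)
There is a genuine gap, and it is in the construction itself: a star of $m$ parallel gatekeepers cannot force the ratio down to $1/m$. In your instance, let $k\geq 1$ be the number of gatekeepers past whom the mechanism diffuses. Your own accounting gives realised welfare at most $(m-k)\cdot 1 + m\varepsilon$, and this is maximised at $k=1$: the mechanism sacrifices a single item on one $\varepsilon$-valued gatekeeper, reaches her subtree (which you allow to contain many valuation-$1$ buyers), and allocates the remaining $m-1$ items there, achieving $\SW \approx m-1$ against $\SW_{\wopt}\approx m$, i.e., a ratio of $(m-1)/m$ rather than $1/m+\lambda$. Even if you restrict each subtree to a single valuation-$1$ buyer, the bound becomes $\min(k,m-k)+m\varepsilon$, which balances at $k=\lfloor m/2\rfloor$ and yields a ratio near $1/2$. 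The conclusion ``at most $1+O(\varepsilon)$ units'' does not follow from the inequality you wrote down except when $m\leq 2$. The essential point your construction misses is that the $1/m$ bound requires the wasted items to accumulate \emph{serially}: the adversarial instance must make the mechanism spend $m-1$ of its $m$ items before it can touch even one top-tier valuation. The paper does this with a \emph{chain} $i_1\to i_2\to\cdots\to i_m\to i_{m+1}$, where $i_1,\dots,i_{m-1}$ have valuation $n$, $i_m$ has $n^2$, $i_{m+1}$ has $n^3$, and $m-1$ further buyers $j_1,\dots,j_{m-1}$ with valuation $n^2-\tau$ are adjacent to $s$. IC with no reward forces all $m$ items down the chain (ending at $i_m$), so $\SW \leq n^2+(m-1)n$, while the $j$'s lie in $B^*$ and inflate $\SW_{\wopt}$ to $mn^2-(m-1)\tau$; letting $n\to\infty$ gives the ratio $1/m$. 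Note also that the ultra-high buyer $i_{m+1}$ is deliberately placed \emph{behind} the last winner so that she is excluded from $B^*$ and hence from $\SW_{\wopt}$ — a role your construction has no analogue of.

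A secondary weakness is the incentive lemma you flag as the main obstacle. As stated, it is not correct: from ``$g$ gets zero utility when reporting truthfully'' you infer a violation of IC, but IC only requires $u_g(\theta_g,\cdot)\geq u_g(\theta_g'',\cdot)$, and a mechanism that gives $g$ zero utility both when she diffuses and when she hides her neighbours is perfectly incentive compatible; likewise ``she wins at a price at most her valuation by IR'' presupposes that she wins, which is exactly what needs to be argued (e.g., via non-wastefulness in the shrunken market, where hiding leaves exactly $m$ reachable buyers for $m$ items). The paper is admittedly terse on this same step, but in its chain instance the shrunken market after $i_k$ hides has exactly the right size to make that argument go through, whereas in your parallel instance the competition from the other $m-1$ subtrees survives the deviation and the lemma is false as stated.
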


\begin{figure}[!htbp]
    \centering
    \includegraphics[width=0.5\columnwidth]{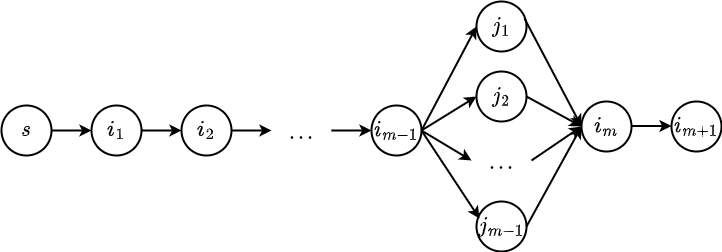}
    \caption{A social network with a seller $s$ and $2m$ buyers.
    }
    \label{fig:subopt}
\end{figure}

\begin{proof}
Consider the graph as shown in Fig.~\ref{fig:subopt} which depicts a situation with $2m$ buyers. The valuations and connections of the buyers are defined as follows:
\begin{itemize}[leftmargin=*]
    \item $i_1,i_2,i_3,\ldots,i_{m-1}$ have valuation $n>1$,
    \item $j_1,j_2,\ldots,j_{m-1}$ have valuation $n^2-\tau$ for a small $\tau>0$,
    \item $i_m$ has valuation $n^2$, and $i_{m+1}$ has valuation $n^3$.
\end{itemize}
To guarantee IC and NW, a mechanism must allocate $m-1$ items to buyers $i_1,i_2,\ldots,i_{m-1}$ and the last item to $i_m$.
Thus the social welfare of any IC mechanism is at most $n^2+(m-1)n$. For such a mechanism, $\SW_{\wopt}$ is $m n^2 - (m-1)\tau$.
For sufficiently large $n$,
\begin{align*} 
      & \frac{n^2+(m-1)n}{m n^2-(m-1)\tau} 
    = \frac{1}{m} + \frac{(m-1)\tau/m+(m-1)n}{m n^2-(m-1)\tau} \\
                                       & \leq \frac{1}{m}+\frac{1}{mn^2-m}+\frac{1}{n-(m-1)\tau/mn} 
                                       < \frac{1}{m}+\lambda.
\end{align*}
\end{proof}

{\bf Remark.} A natural question arises as to whether a mechanism exists which achieves efficiency. As demonstrated above, such a mechanism necessarily incentivises critical buyers using rewards. We provide a mechanism, named MUDAR, towards meeting this goal. Similar to MUDAN, MUDAR also implements the generic mechanism (Alg.~\ref{alg:framework}). The difference between the two mechanism lies in how they incentivise winners. Unlike MUDAN, once a winner $w$ is chosen,  MUDAR may either allocate an item to $w$ or give $w$ a reward (i.e., a negative payment), which equals to the utility of $w$ had she been allocated an item. The allocation result is determined after the graph exploration is completed, when the buyers' connections are fully revealed.  In this way, MUDAR can identify  buyers that report the $m$-highest valuations globally.
MUDAR achieves IR, ND, and NW. Assuming that the $m$th highest valuation among buyers is public, MUDAR ensures a weaker form of IC. Moreover, the buyers form an equilibrium such that applying MUDAR leads to an efficient allocation. See details in Appendix E.

\section{Multi-demand multi-unit diffusion auction}
\label{sec:multi}

We now generalise the {\em single-demand} setting in Section~\ref{sec:model} to {\em multi-demand} auction.
Here each buyer $i\in B$ may demand more than one item and attaches a valuation $v_{i,j}$ to the $j$th item she gets, where $1\leq j \leq m$. The valuation for all items is denoted by a vector $\vec{v}_i \coloneqq (v_{i,1},\ldots,v_{i,m})\in \R_{+}^m$; we call $\vec{v}_i$ the {\em valuation vector} of the buyer $i$.
The valuation vector of a buyer who demands $k < m$ units is represented by an $m$-dim vector with $m-k$ $0$s at the end.
For simplicity, we omit $0$ in the vector.
Each additional unit often brings less additional utility than that from the previous unit, which is known as the {\em law of diminishing marginal utility} in micro-economics  \cite{gossen1983laws}. \footnote{As an analogy, think of a hot summer, the first bottle of iced water brings much satisfaction while the second or even third one brings lower satisfaction.}
Therefore, we assume that the buyers have {\em diminishing valuation} towards the items: $v_{i,j} \geq v_{i,j+1}$ for $j=1,\ldots, m-1$.
In this setting, we denote the {\em (multi-demand) profile} of a buyer $i$ as $\eta_i\coloneqq (\vec{v}_i, t_i)$ where $\vec{v}_i$ is the valuation vector and $t_i\subseteq B$ is the set of neighbours of $i$. The global profile is $\eta'\coloneqq (\eta'_1,\ldots, \eta'_n)$ that corresponds to profile graph $G_{\eta'}$.

Let $H$ denote the set of all possible (multi-demand) global profiles.
A {\em mechanism} $\M$ in this setting consists of {\em allocation rule} $\pi\colon H\to \{0,1\}^{n\times m}$ and {\em payment rule} $p\colon H\to \R^{n\times m}$. Here, each $\pi_i (\eta')$, $p_i(\eta')$ are $m$-dimensional vectors; We write them as $\pi_i (\eta')=(\pi_{i,1} (\eta'), \ldots, \pi_{i,m} (\eta'))$ and $p_i (\eta')=(p_{i,1} (\eta'), \ldots, p_{i,m} (\eta'))$.

We formally define properties of a mechanism below:
\begin{itemize}[leftmargin=*]
\item The {\em utility} $u_i(\eta')$ of the buyer $i$ is defined as $\sum_{j=1}^m v_{i,j} \pi_{i,j}-p_{i,j}$.
\item The {\em social welfare} $\SW(\eta')$ of the mechanism $\M$ is the sum of the utilities of all the agents, i.e., $\sum_{i=1}^n \sum_{j=1}^m v_{i,m} \pi_{i,j}$.
\item The {\em optimal social welfare} $\SW_{\opt}$ is the sum of the top-$m$ valuations among $v_{i,j}$, $1\leq i\leq n$, $1\leq j\leq m$.

\item The {\em revenue} $\RV(\eta')$ is the sum of the payment of all buyers, i.e., $\sum_{i=1}^n\sum_{j=1}^m  p_{i,j}$.
\end{itemize}

Instead of designing a mechanism for the multi-demand setting from scratch, we reduce the problem to its single-demand counterpart. Given (multi-demand) profiles $\eta$ of buyers $B=\{1,\ldots,n\}$, our goal is to construct a set of buyers $\widetilde{B}$ with (single-demand) profiles $\theta$ in such a way that
any mechanism $\widetilde{\M}$ that is applied to $\widetilde{B}$ corresponds to a mechanism $\M$ that is applied to $B$. The intuition is that, essentially, we may view a buyer $i\in B$ as $m$ buyers, each demanding one item with valuation $v_{i,j}$.
%
More precisely, given a profile $\eta'$, to define our mechanism $\M$ we perform the following steps.

\smallskip

\noindent {\bf (1).}
For each buyer $i\in B$, create $m$ nodes $i_1,\ldots,i_m$ in $\widetilde{B}$, each corresponding to an item $1\leq j\leq m$, i.e., $\widetilde{B}\coloneqq \{i_j\mid 1\leq i\leq n, 1\leq j\leq m\}$.

\noindent {\bf (2).} Construct the following profiles $\theta'$ for buyers in $\widetilde{B}$:
\begin{itemize}
\item In the profile graph $G_{\theta'}$, connect all buyers $i_1,\ldots,i_m$ to form a chain using edges $(i_1,i_2),\ldots, (i_{m-1},i_m)$.
\item If an edge exists between the seller $s$ and buyer $i$ in $G_{\eta'}$, then add an edge $(s,i_1)$ in $G_{\theta'}$.
\item If an edge exists between buyers $i$ and $j$ in $G_{\eta'}$, then add an edge $(i_m,j_1)$ in $G_{\theta'}$.
\item The true and reported valuation of a buyer $i_j\in \widetilde{B}$ are $v_{i,j}$ and $v'_{i,j}$, resp.
\item The priority of the buyer $i_j$, for $1\leq j\leq m$, is  the priority of $i$ in $B$.
\end{itemize}

\noindent {\bf (3).} Apply a (single-demand) mechanism $\widetilde{\M}=(\tilde{\pi},\tilde{p})$ to $\theta'$, where $\tilde{\pi}$ is the allocation rule and $\tilde{p}$ is the payment rule. Return the mechanism $\M\coloneqq (\pi, p)$ where allocation rule $\pi$ and payment rule $p$ are defined below:
\begin{itemize}
\item Define $\pi\colon H\to \{0,1\}^{n\times m}$ by $\pi_{i,j}(\eta')\coloneqq \tilde{\pi}_{i_j}(\theta')$.

\item Define $p\colon H\to \R^{n\times m}$ by $p_{i,j}(\eta')\coloneqq \tilde{p}_{i_j}(\theta')$.
\end{itemize}
Figure~\ref{fig:multi} illustrates the construction of $\theta'$ given a multi-demand profile $\eta'$ over seven buyers. When we choose  MUDAN  as $\widetilde{\M}$, the corresponding mechanism $\M$ is called {\em MUDAN-$m$}.


\begin{figure}[!htbp]
    \centering
    \includegraphics[width=0.45\columnwidth]{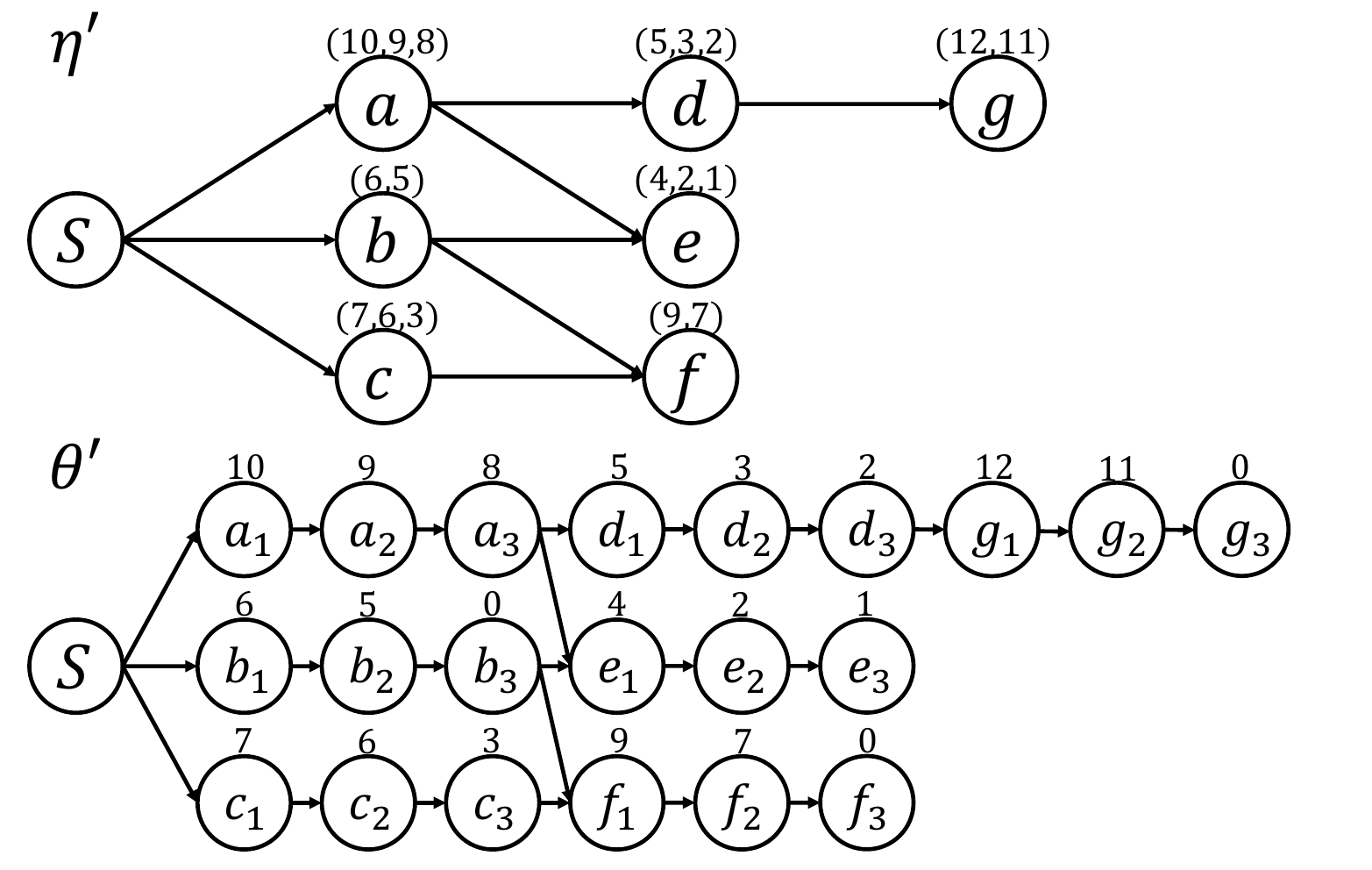}
    \caption{The reduction of an multi-demand auction (above) to the single-demand setting (below). Valuation is on top of each node.}
    \label{fig:multi}
\end{figure}

Consider a mechanism $\widetilde{\M}$ applied to $\widetilde{B}$ with the constructed profile $\theta$, and the corresponding mechanism $\M$ applied to $B$ with the profile $\eta$. It is straightforward to observe from the construction that the utility of a buyer $i$ from $\M$ equals the sum of utilities of $i_1,\ldots,i_m$ from $\widetilde{\M}$. Additionally, the social welfare, optimal social welfare, revenue, and number of allocated items of $\M$ are equal to those of $\widetilde{\M}$, respectively. Please refer to Lemma 18 in Appendix F for the formal statement.
%
Moreover, analogously to Def.~\ref{def:ic}, we define IC, IR, ND, NW and efficiency for a mechanism in multi-demand case. See Def. 19 of App. F. The next lemma follows from Lem. 18. 

\begin{lemma}\label{lem:multi-property}
$\M$ satisfies any of the properties in IR, ND, NW, efficiency whenever so does $\widetilde{\M}$. \QED 
\end{lemma}

\paragraph*{\bf The MUDAN-$m$ mechanism.}
We show the MUDAN-$m$ mechanism has desirable properties (in Theorem~\ref{thm:mudan-m}). 
IR, ND, NW, and $1/m$-weak efficiency easily carry over from the single-demand case. The proof of IC, however, requires a non-trivial justification: If a buyer $i\in B$ misreports her valuation vector $\vec{v}_i$, a group of buyers in $\widetilde{B}$, namely $i_1,\ldots,i_m$, may misreport their valuations {\em together}. This amounts to a case of {\em collusion} among the buyers in $\widetilde{B}$, which is not accounted for in the single-demand case. Nevertheless, in Lemma~\ref{lem:mudan-m-ic} below, we show that the buyers in $B$ would not violate the truthfulness properties. In particular, our mechanism ensures that at most one buyer $i_j$ where $1\leq j\leq m$ may be in the set $\PWin\setminus W$ for any $i\in B$ at any iteration, hence collusion does not give extra incentive for the buyers $i_1,\ldots,i_m$.  See the full proof in Appendix F.

\begin{lemma}\label{lem:mudan-m-ic}
The MUDAN-$m$ mechanism is IC. 
\end{lemma}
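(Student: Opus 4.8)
The plan is to lift the single-demand IC of Lemma~\ref{lem:MUDAN-IC} to the group of copies $i_1,\dots,i_m$ created for each multi-demand buyer $i\in B$. By Lemma~\ref{lem:multi}(1) we have $u_i(\eta')=\sum_{j=1}^m u_{i_j}(\theta')$, so proving IC for MUDAN-$m$ reduces to showing that buyer $i$ cannot increase $\sum_{j=1}^m u_{i_j}(\theta')$ by any deviation: a deviation in the valuation vector $\vec{v_i}$ translates into a \emph{joint} misreport of the copies' valuations $v'_{i,1},\dots,v'_{i,m}$, and a deviation in the neighbour set $t_i$ translates into $i_m$ hiding external neighbours together with a uniform drop of the priority shared by all copies. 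The genuine difficulty, absent in the single-demand setting, is that the copies may collude: one copy could report untruthfully, possibly at a loss to itself, in order to help another copy of the same buyer.

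I would first dispose of the neighbour deviation. Since the chain edges $(i_1,i_2),\dots,(i_{m-1},i_m)$ are structural and only $i_m$ carries external edges $(i_m,j_1)$, hiding a neighbour of $i$ is exactly $i_m$ hiding external neighbours, and it can only lower the common priority $\sigma=|t_i'|$ of every copy. By the monotonicity requirement on $\sigma$ and the neighbour part of Lemma~\ref{lem:MUDAN-IC}, no copy's utility can rise and no new copy can be made to win, so the sum cannot increase.

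The core of the argument is a structural lemma: at every iteration of MUDAN run on $\theta'$, at most one copy $i_j$ lies in $\PWin\setminus W$. This follows from the chain structure: for $j\ge 2$, copy $i_j$ enters $A$ only through the edge $(i_{j-1},i_j)$, i.e.\ only once $i_{j-1}$ has been resolved (added to $W$ or exhausted), at which point $i_{j-1}\notin\PWin\setminus W$; inductively, whenever $i_j\in\PWin\setminus W$ all earlier copies are already resolved and all later copies have not yet entered $A$. Consequently the copies are activated one at a time. I would then use this to show that a copy's report cannot affect an earlier copy, which was already resolved and was priced from a set $A\setminus W$ that did not yet contain the later copies; hence the only cross-copy effect of $i_j$'s report is on \emph{later} copies, acting solely through the consumed-item count $m'$ and the explored set.

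This reduces the collusion question to a single exchange argument, which I expect to be the main obstacle. Overreporting to make $i_j$ win a lower-valued item consumes one unit of $m'$, tightening competition for later copies, while $i_j$ pays above its value, so it strictly hurts the group; underreporting to make a high-valued copy $i_j$ lose can at best free an item for a later copy $i_{j'}$ with $v_{i,j'}\le v_{i,j}$, and here the diminishing-valuation assumption $v_{i,1}\ge\dots\ge v_{i,m}$ together with MUDAN's $(m'+1)$-th-highest-valuation payment rule guarantees the swap is never profitable. Combined with the single-demand IC of Lemma~\ref{lem:MUDAN-IC} governing each active copy's own incentive, the truthful vector maximises $\sum_j u_{i_j}$, establishing IC for MUDAN-$m$.
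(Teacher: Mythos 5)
Your proposal is correct and follows essentially the same route as the paper's proof: both hinge on the observation that the chain construction activates the copies $i_1,\dots,i_m$ one at a time (so at most one copy is in $\PWin\setminus W$ per iteration), that a copy's report cannot affect already-resolved earlier copies or not-yet-explored later ones except through $m'$, and that diminishing valuations rule out any profitable joint over- or under-report; the neighbour argument via the monotone priority is also the paper's. The paper simply executes your concluding exchange argument as an explicit case analysis over winner, potential-winner, and exhausted copies.
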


\begin{proof}
We first prove that no buyer in $B$ can benefit from misreporting her valuation vector.
To run $\M$, we first execute the MUDAN mechanism $\widetilde{\M}$ given the constructed profile $\theta'$ on $\widetilde{B}$. Now consider the execution of $\widetilde{\M}$.

If $i_1$ is selected as a winner, then $\widetilde{\M}$ will proceed to select more $i_j$ as winners until the valuation is so low that $i_j$ is not added in $\PWin$. That is because after $i_j$ is chosen as winner, the only buyer in $\widetilde{B}$ that is added to $A$ is $i_{j+1}$. By definition of the priority on $B$, if $i_{j+1}$ is added to $\PWin$, then $i_{j+1}$ will be the next node with the highest priority and is thus chosen  as the winner.

Consider a given iteration of $\widetilde{\M}$. By the argument above, if a buyer $i_j\in \PWin$ is not chosen as the winner, then $j=1$. Moreover, if $i_j$ is exhausted, then $i_{\ell}$ for all $\ell>j$ are exhausted also due to the diminishing valuation. This means that for any buyer $i\in B$, either $i_1\in \PWin$, or $\{i_1,\ldots,i_m\}\cap \PWin=\varnothing$. This observation is crucial for the proof of truthfulness.

Now suppose in the given iteration, $w_j$ is chosen as the winner. Suppose $w\in B$ misreports her valuation vector so that $\vec{v'_w}\neq \vec{v_w}$. We show that this strategy will not give $w$ extra utility at this iteration:
{\bf (a)} Suppose $v_{w,\ell}\neq v'_{w,\ell}$ where $\ell>j$. At this iteration, $w_\ell$ would {\em not} have been added in $A$. Therefore for $w$, misreporting $v_{w,\ell}$ does not affect the tentative payment $\hat{p}(w_j)$ for her $j$th item. 
{\bf (b)} Suppose $v_{w,k}\neq v'_{w,k}$ where $k<j$. Note that $w_k$ must have been a winner in an earlier iteration. So the valuation of $w_k$ is not taken into consideration when the algorithm sets the tentative payment $\hat{p}(w_j)$. 
{\bf (c)} Suppose $w$ misreports the valuation of $v_{w,j}$ in such a way that $v'_{w,j}>v_{w,j}>\hat{p_{w_j}}$ or $v_{w,j}>v'_{w,j}>\hat{p_{w_j}}$. Here, $\hat{p_{w_j}}$ is the tentative payment of $w_j$ assuming $\tilde{\M}$ sets $w_j$ as the winner. As this does not change $w_j$'s priority, $w$ is still chosen as the winner in this iteration and is allocated her $j$th item with payment $\hat{p_{w_j}}$. This item will add $w$ a utility of $u_{w_j}((v'_{w_j},r_{w_j}), \theta_{-w_j})=u_{w_j}((v_{w_j},r_{w_j}), \theta_{-w_j})$.
{\bf (d)} On the other hand, if  $v'_{w,j}<\hat{p_{w_j}}<v_{w,j}$, then $w$ loses the $j$th item as all $w_{\ell}$ where $\ell\geq j$ are exhausted, giving her no extra utility.
Summarising (a)-(d), in the given iteration, misreporting any element in valuation vector $(v_{w,1},\ldots,v_{w,m})$ will not give $w$ extra utility.

We then prove that for buyer $i\neq w, i_1\in \PWin$ and for $i\neq w, \{i_1,\ldots,i_m\}\cap \PWin=\varnothing$, $i$ cannot benefit from misreporting valuation using similar arguments above. See full argument in App. F.


It remains to prove that no buyer in $B$ can benefit from misreporting her neighbour set. Our argument is the following: For each agent $i_j\in \widetilde{B}$, her priority cannot increase when she hide any of her neighbours. And her neighbours is only added in $A$ either when $i_m$ is chosen as a winner or when $i_m$ is exhausted so that her neighbours cannot influence her allocation. The rest of the proof is the same as in the proof of Lemma~\ref{lem:MUDAN-IC}.
\end{proof}


To analyze the social welfare of MUDAN-$m$, we introduce the definition of $\epsilon$-weak efficiency. Similarly to the single-demand case, we define critical buyers, the set $B^*$, weakly-optimal social welfare $\SW_{\wopt}$, and $\epsilon$-weak efficiency for the multi-demand case. The only difference is that $\SW_{\wopt}$ is the sum of the top-$m$ valuations among $v_{i,j}$ where $i\in B^*, 1\leq j\leq m$. See formal statement in Appendix F.
%
%
The next theorems is also proved in Appendix F.

\begin{theorem}
\label{thm:mudan-m}
MUDAN-$m$ is IC, IR, ND, NW, and $1/m$-weakly efficient.\QED
\end{theorem}


\section{Experiments}\label{sec:exp}

\begin{figure}[!htbp]
    \centering
    \includegraphics[width=1\textwidth]{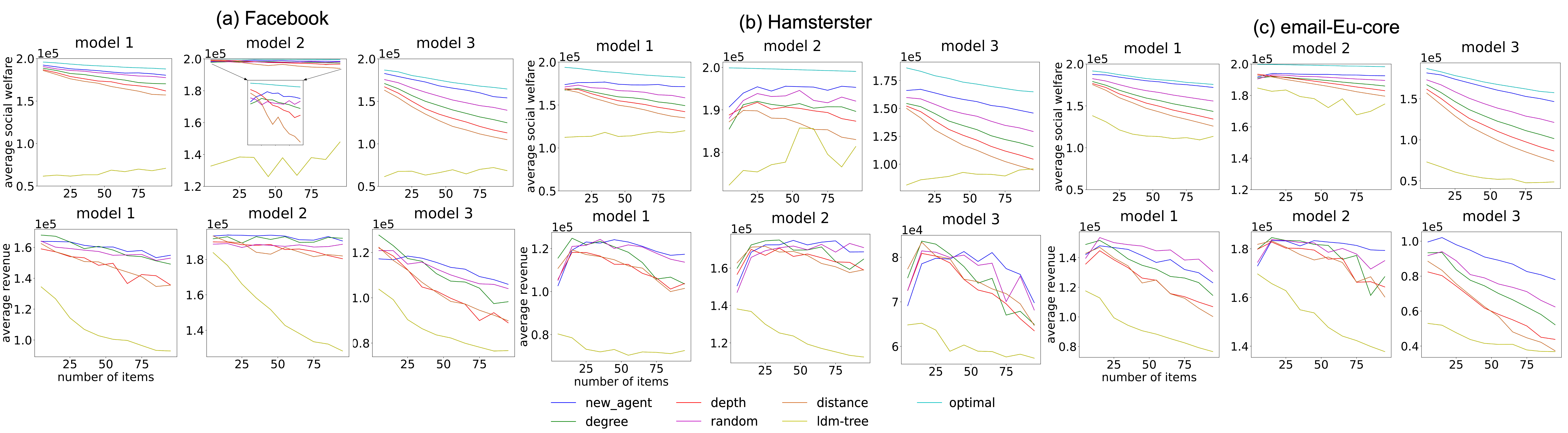}
    \caption{Social welfare and revenue of different traversal strategies in three models for Facebook, Hamsterster and Email networks}
    \label{fig:results}
\end{figure}

Finally, we empirically evaluate our mechanism\footnote{The code of this work is available at \url{https://anonymous.4open.science/r/MUDAN-2C0B}.}. We have two purposes: First we evaluate the social welfare and revenue when MUDAN-$m$ is applied, and second we focus on priority score $\sigma_i$. Recall that the priority $\sigma_i$ determines the selection of winner at an iteration. It affects the outcome of MUDAN-$m$. 
%
The idea of priorities to buyers has been exploited by several existing diffusion auction mechanisms, and three priority orderings were used:  {\bf(1)} {\bf depth-based selection} \cite{zhao2018selling}
{\bf (2)} {\bf distance-based selection} \cite{kawasaki2020strategy}, and {\bf (3)} {\bf degree-based selection} \cite{xiao2022multi}.
Yet their effectiveness has not been analysed. Since all three approaches can be adopted by our mechanisms, we now examine how they affect social welfare in MUDAN-$m$.

Depth-based (or distance-based) selection prioritizes buyers who are farthest away from (or closest to) the seller, which poses a risk of omitting buyers with high valuations but who are close to (or far away from) the seller. Degree-based selection prioritizes buyers who have more neighbors, but this does not necessarily guarantee the exploration of a large number of buyers. With this in mind, we propose a new traversal strategy: {\bf (4)} {\bf New-agent-based selection:} Prioritize agents who can bring the highest number of unexplored buyers to $A$. In this strategy, only the contribution to graph exploration is taken into account, which we expect to lead to higher social welfare. We describe our experiments below.


{\bf Dataset.} We use three real-world datasets, including
Facebook social network \cite{leskovec2012learning}, Hamsterster friendships \cite{Kunegis2013konect}, and email-Eu-core network \cite{Yin2017Local}. Facebook network has $4,039$ nodes and $88,234$ edges, Hamsterster friendship network has $1,858$ nodes and $12,534$ edges, while email-Eu-core network has $1,005$ nodes and $25,571$ edges. Table \ref{tab:dataset} shows the key statistics of these three datasets.
For each dataset, we randomly select one node as the seller. As the initial setup, especially the neighbour set of the seller, may effect experiment results, we repeat each scenario $n/2$ times, where $n$ is the number of nodes, and calculate the average revenue and social welfare as the result for the scenario.

\begin{table}
    \centering
    \begin{tabular}{|c|c|c|c|c|}
    \hline
    dataset & $|V|$ & $|E|$ & $C$ & diameter \\
    \hline
    Facebook social network & 4039 & 88234 & 0.6055 & 8 \\
    \hline
    Hamsterster friendships & 1858 & 12534 & 0.0904 & 14 \\
    \hline
    email-Eu-core network & 1005 & 25571 & 0.3994 & 7  \\
    \hline
    \end{tabular}
    \caption{Dataset statistics. $C$ denotes clustering coefficient}
    \label{tab:dataset}
\end{table}


{\bf Valuation.}
We use three different models to generate the agents' valuation.
\noindent {\bf Model 1:} All the valuations of buyers are sampled at random i.i.d. Specifically, we assume $v_{i,j} \sim U(0,200000)$ for $1\leq i \leq n, 1\leq j\leq m$.
\noindent {\bf Model 2:} To increase diversity, the highest valuations of buyers are sampled i.i.d. while their subsequent valuations are independently but non-identically distributed. Here, we assume that $v_{i,1} \sim U(0,200000)$ for $1\leq i \leq n$, and $v_{i,j} \sim U(1, v_{i,1})$ for $1\leq i \leq n, 1< j \leq m$.
\noindent {\bf Model 3:} The valuation of the buyers are affected by their neighbours, in particular, the {\em homophily principle} asserts that agents who are tightly connected tend to exhibit similar preferences \cite{mcpherson2001birds}. To capture possible dependence among closely-tied buyers, we deploy the DeGroot model, an established model of social influence \cite{jackson2010social}, to generate the highest valuations $v_{i,1}$ for $1\leq i \leq n$.
 DeGroot model \cite{jackson2010social} assumes that each agent's valuation for the next iteration is derived from a weighted average of her own valuation and those of her neighbours in the network. The weight is assigned by the agent, and it represents her confidence in her own valuation or her friendship with others.
The other valuations of a certain buyer $v_{i,j} \sim U(1, v_{i,1})$ for $1\leq i \leq n, 1< j \leq m$.

{\bf Benchmark.} The only currently known IC multi-unit diffusion auction mechanism, LDM-Tree \cite{liu2022diffusion}, is chosen as our benchmark.
{\em Random selection} is used as a benchmark for traversal strategies. It randomly allocates a buyer from the potential winner set.
The {\em optimal social welfare}, i.e., the sum of top-$m$ valuations, is also included in the comparison.
We evaluate all four implementations of MUDAN-$m$ as well as these benchmarks in terms of {\em social welfare $\SW$} and {\em revenue $\RV$} as defined in Sec.~\ref{sec:model}.





{\bf Results.} Figure~\ref{fig:results} shows the average $\SW$ and $\RV$ per item. {\bf (1)} As shown, MUDAN-$m$ significantly outperforms LDM-Tree. 
MUDAN-$m$ with new-agent selection loses by at most $9\%$, $12\%$ and $8\%$ from the optimal social welfare in Facebook, Hamsterster, and Email networks, respectively, exhibiting that it achieves near-optimal social welfare. In contrast, LDM-Tree loses nearly $75\%$, $62\%$ and $62\%$, resp. 
 {\bf (2)} Among the priority schemes, new-agent selection in general outperforms the other  schemes. When the number of items is small, new-agent selection is slightly less than that of other strategies, but grows faster as the number of items increases. 
{\bf (3)} Across the valuation models, MUDAN-$m$ with new-agent selection performs better in general. This is particularly visible in model 3, which is consistent with our expectation: When the valuations are dependant, buyers with similar valuations form communities, new-agent-based selection is more advantageous as it could more easily jump out from lower-valuation communities.
We may draw consistent conclusions from the results for all three datasets showing robustness of our mechanism.

%
%

\section{Conclusion and future work}
We focus on multi-unit diffusion auctions in this paper. We propose the MUDAN mechanism, the first multi-unit diffusion auction that satisfies the properties IC, IR, ND, NW, and $1/m$-weak efficiency; We gave a case when the $1/m$-weak efficiency bound is tight. We also define a reduction from multi-demand case to single-demand case so that MUDAN can be employed to this generalised problem. The corresponding MUDAN-$m$ mechanism satisfies all properties in the single-demand case. 



As future work, we plan to explore the cases (1) when buyers can perform {\em false name attacks}, i.e., reporting agents who are not in their neighbour set,  (2) when buyers may collude to benefit from group misreporting, and (3) when the values of the items are not additive.


\section*{Acknowledgements}
This work is partially supported by National Natural Science Foundation of China No. 62172077.

\bibliographystyle{abbrv}
\bibliography{sample.bib}

\appendix

\clearpage
\newpage

\noindent {\LARGE \bf Appendix}

\section{GIDM Mechanism}
\label{app:gidm}

For any global profile $\theta'$ and its
profile digraph $G_{\theta'}=(V_{\theta'}, E_{\theta'})$, we define the {\em $\theta'$-criticality} relation on the buyer set $V_{\theta'}$:
\begin{definition}
For any buyers $i,j\in V_{\theta'}$, $i$ is {\em $\theta'$-critical to} $j$, denoted by $i\preceq_{\theta'} j$, if all paths from $s$ to $j$ in $G_{\theta'}$ go through $i$.
\end{definition}
The criticality relation can be visualised as a tree structure $T_{\theta'}$ which we call {\em $\theta'$-diffusion critical tree} \cite{zhao2018selling}. The root of $T_{\theta'}$ is $s$ and for any buyer $j\in V_{\theta'}$, the parent of $j$ is the node $i\preceq_{\theta'} j$ that has the closest distance to $j$.
$T_{\theta'}$ provides insights on how competitions can be alleviated among the buyers to elicit truthful behaviours: If $i\preceq_{\theta'} j$, then $i$ has the power to block information diffusion to $j$. A truthful mechanism should thus make $i$ indifferent on whether $j$ participates the auction or not. Intuitively,
if $i$'s valuation is sufficiently high so that $i$ may win an item had $i$ misreported her connections, then $i$ should be compensated, in the forms of rewards or allocation.

GIDM allocates items based on the diffusion critical tree $T_{\theta'}$. Essentially, the mechanism treats each subtree of $T_{\theta'}$ as a {\em sub-market} and allocates items recursively for each sub-market. The number of items allocated to each sub-market depends on the number of top-$m$ bidders that subtree contains. An agent $i$ becomes a winner either if she is one of the top-$m$ bidders, or her reported valuation is high enough to win had she misreported her neighbours. A buyer who does not win any item may also receives a reward. See detailed procedures below.

\smallskip

\noindent {\bf Detailed procedure description:} Given a global profile $\theta'$, the mechanism first constructs the tree $T_\theta'$, which includes only the buyers with top $m$ valuations and their critical parents. For each buyer $i\in T_\theta'$, let $w_i$ denote the total number of items that buyer $i$ and her critical children $\mathsf{Desc}_i$ can get in the optimal allocation tree. Then the allocation is done with a DFS-like procedure. Let $S$ be a stack that the mechanism used for DFS-like traverse. Initially, $S$ is empty. The seller $s$ first gives $w_i$ items to buyers $\{i\in r_s \mid w_i>0\}$ and pushes all of the buyers into $S$. Then the mechanism pops the top buyer $i$ from $S$ and checks whether she can be allocated an item in the optimal allocation. If so, the mechanism allocates her an item and adds her into the winner set $W$; otherwise, check whether $i$ can receive an item if removing a subset $\mathsf{Desc}_i^m$ of her critical children: $\mathsf{Desc}_i^m \coloneqq \mathsf{Desc}_i(\theta)^m\cup \mathsf{Ance}(\mathsf{Desc}_i(\theta)^m)\cup \mathsf{Desc}(\mathsf{Ance}(\mathsf{Desc}_i(\theta)^m)),$ where $\mathsf{Desc}_i(\theta)^m$ contains the top-$m$ critical children of $i$.  If so, she gets an item from her critical children who has the lowest valuation and the mechanism updates the other agents' weight. If not, she passes the item to her critical children $C_i$ according to their weight and adds them into $S$. At the same time, she gets some reward. Repeat the allocation process until $S$ is empty. See Algorithm~\ref{alg:GIDM}.

\begin{algorithm}[ht!]
    \caption{GIDM mechanism}
    \label{alg:GIDM}
    \begin{algorithmic}[1]
    \State Initialise empty stack $S$
    \State Construct optimal allocation tree $T_\theta'$ and calculate each agent $i$'s weight $w_i$
    \State Push buyers from $\{i\in r_s \mid  w_i>0\}$ into $S$
    \While{$S\neq \varnothing$}
    \State Pop the top buyer $i\in S$ from $S$
    \If{$i$ can get an item in optimal allocation}
    \State Allocate $i$ an item with payment $\SW_{-D_i}-(\SW_{-\mathsf{Desc}_i^m}-v_i)$, add her into winner set $W$
    \Else
    \If{$i$ can get an item if removing $\mathsf{Desc}_i^m$}
    \State Allocate $i$ an item from the agent who who has the lowest valuation with payment
    $\SW_{-D_i}-(\SW_{-\mathsf{Desc}_i^m}-v_i)$
    \State Update $w_i$ of relative agents
    \Else
    \State Pass $i$' item(s) to her critical children, reward $i$ by $\SW_{-D_i}-\SW_{-\mathsf{Desc}_i^m}$, where $D_i=\mathsf{Desc}_i\cup\{i\}$
    \EndIf
    \EndIf
    \State Push $i$'s critical children into $S$
    \EndWhile
    \end{algorithmic}
\end{algorithm}

This mechanism is not IC as it enables a buyer with a high valuation to potentially manipulate the outcome to her own benefit \cite{takanashi2019efficiency,kawasaki2020strategy}. We show it with Example~\ref{exa:GIDM} below.

\begin{example} \label{exa:GIDM}
Consider the social network in Fig.~\ref{fig:single} with seven buyers. As the network is a tree, $G$ coincides with the diffusion critical tree $T_\theta$. The seller has $m=4$ items to sell.  Suppose we run GIDM on this social network. If all buyers report truthfully, GIDM gives
all four items to the sub-market of $b$ as all top-four bidders (i.e., the vip) $d,e,f,g$ are descendents of $b$. The resulting winners are $b,c,d,e$.
On the other hand, suppose $f$ misreport her connection declaring $r'_f=\varnothing$, then GIDM will only give three items to the sub-market of $b$, while allocating an item to $a$. This increases the competitions within the sub-market of $b$ and the allocation result is $a,b,c,f$. In this way, $f$ manipulates the market to her advantage. The detailed run-through of these two cases are shown in Table~\ref{tab:GIDM} and Table~\ref{tab:GIDM-2}.
\end{example}

\begin{table}
    \begin{center}
    \caption{Running GIDM on the network in Fig.~\ref{fig:single} with $m=4$ assuming all buyers report truthfully. The winners are $b,c,d,e$.}
    \label{tab:GIDM}
    \begin{tabular}{|c|c|c|c|}
    \hline
    rd & $m'$ &  winners & $\pi,p$ \\
    \hline
    1 & 4 & $\varnothing$ & $\pi_b=1$, $p_b=0$ \\
    \hline
    2 & 3 & $\{b\}$ & $\pi_c=1$, $p_c=0$\\
    \hline
    3 & 2 & $\{b,c\}$ & $\pi_d=1$, $p_d=5$\\
    \hline
    4 & 1 & $\{b,c,d\}$ & $\pi_e=1$, $p_e=3$ \\
    \hline
    \end{tabular}
    \end{center}
\end{table}

\begin{table}
    \begin{center}
    \caption{Running GIDM on the network in Fig.~\ref{fig:single} with $m=4$ assuming $f$ reports $r_f'=\varnothing$. The winners are $a,b,c,f$.}
    \label{tab:GIDM-2}
    \begin{tabular}{|c|c|c|c|}
    \hline
    rd & $m'$ & winners & $\pi,p$ \\
    \hline
    1 & 4 & $\varnothing$ & $\pi_a=1$, $p_a=0$ \\
    \hline
    2 & 3 & $\{a\}$ & $\pi_b=1$, $p_b=0$\\
    \hline
    3 & 2  & $\{a,b\}$ & $\pi_c=1$, $p_c=0$\\
    \hline
    4 & 1 & $\{a,b,c\}$ & $\pi_f=1$, $p_e=6$ \\
    \hline
    \end{tabular}
    \end{center}
\end{table}

\section{The DNA-MU Mechanism}
\label{app:dna-mu}
DNA-MU follows a conceptually simpler process. The main difference lies in assigning a priority order to buyers based on their distance from the seller in the social network. Buyers who are closer to the seller get higher priority. In this priority order, the mechanism traverses through all agents while allocating winners. 

\smallskip

\noindent {\bf Detailed procedure description:} 
DNA-MU first initialises the residual supply $m'=m$.
For each agent $i$, DNA-MU extracts the $m'$th highest valuation, called $p_i$, among buyers in $T_{\theta'}$ who fall outside of the subtree of $i$. If $i$'s declared valuation $v'_i\geq p_i$, then $i$ has an incentive to misreport her neighbours as doing so will guarantee her winning an item. In this case, the mechanism compensates $i$ by allocating an item to $i$ with payment $p_i$, before reducing $m'$ by 1. Otherwise, the mechanism proceeds to the next node in the priority order until $m'=0$. See Algorithm~\ref{alg:DNA-MU}.

\begin{algorithm}[ht!]
    \caption{DNA-MU mechanism}
    \label{alg:DNA-MU}
    \begin{algorithmic}[1]
    \State Initialise residual supply $m'\leftarrow m$, winner set $W$
    \For{agent $i\in N$ according to the priority}
    \If{$m'=0$}
    \State Break
    \EndIf
    \State $p_i$ is the $m'$th highest valuation in $N_{-i}\backslash W$
    \If{$v_i>p_i$}
    \State Allocate $i$ an item with payment $p_i$
    \State Update $W\leftarrow W\cup {i}$ and $m'\leftarrow m'-1$
    \EndIf
    \EndFor
    \end{algorithmic}
\end{algorithm}

As shown in an unpublished manuscript \cite{guo2022combinatorial}, DNA-MU also fails in terms of IC. Namely, a similar problem as in Example~\ref{exa:GIDM} exists also for DNA-MU. See Example~\ref{exa:DNA-MU}.

\begin{example}
\label{exa:DNA-MU}
This example is from \cite{guo2022combinatorial}. Let us run DNA-MU on the social network from Figure \ref{fig:single}.
When all reports are truthfully
DNA-MU does not allocate an item to $a$ as $a$'s valuation $v_a=3$ is not larger than the 4-th highest valuations $v_e=4$. Hence, the winners are $b,c,d,e$. Suppose, on the other hand, $f$ misreports her connection $r'_f=\varnothing$. Then DNA-MU would allocate an item to $a$, thus the elements in the other subtree face increased competition. This will result in the winners $a,b,c,f$. As in Example~\ref{exa:GIDM}, $f$ is able to manipulate the outcome to her advantage.
The detailed run-through of these two cases are presented in Table~\ref{tab:DNA-MU} and Table~\ref{tab:DNA-MU-2}.
\end{example}

\begin{table}
    \begin{center}
    \caption{Running example of DNA-MU in Fig.~\ref{fig:single} with $m=4$ assuming all buyers are truthful. The winners are $b,c,d,e$.}
    \label{tab:DNA-MU}
    \begin{tabular}{|c|c|c|c|}
    \hline
    rd & $m'$ & agent  &  $\pi,p$ \\
    \hline
    1 & 4 & $b$ & $\pi_b=1$, $p_b=0$ \\
    \hline
    2 & 3 & $c$ & $\pi_c=1$, $p_c=0$\\
    \hline
    3 & 2 & $d$  & $\pi_d=1$, $p_d=5$\\
    \hline
    4 & 1 & $e$  & $\pi_e=1$, $p_e=3$ \\
    \hline
    \end{tabular}
    \end{center}
\end{table}

\begin{table}
    \begin{center}
    \caption{Running example of DNA-MU in Fig.~\ref{fig:single} with $m=4$ assuming all buyers are truthful. The winners are $a,b,c,f$.}
    \label{tab:DNA-MU-2}
    \begin{tabular}{|c|c|c|c|}
    \hline
    rd & $m'$ & agent &  $\pi,p$ \\
    \hline
    1 & 4 & $a$ &  $\pi_a=1$, $p_a=1$ \\
    \hline
    2 & 3 & $b$ &  $\pi_b=1$, $p_b=0$\\
    \hline
    3 & 2 & $c$ &  $\pi_c=1$, $p_c=0$\\
    \hline
    4 & 1 & $f$ &  $\pi_f=1$, $p_e=6$ \\
    \hline
    \end{tabular}
    \end{center}
\end{table}

\section{The LDM-Tree mechanism}\label{app:ldm-tree}

 The {\em LDM-Tree mechanism} \cite{liu2022diffusion}  utilises {\em local} information given by {\em layers} of the tree, where layer $L_i$ contains agents whose distance from the seller is $i$. The auction runs several rounds. In round $i$, the mechanism only involves nodes in $L_i$ and those nodes in $L_{i+1}$ that do not pose a potential competition to nodes in $L_i$. Example~\ref{exa:LDM-Tree} shows that for single-unit auction (i.e., $m=1$) the social welfare of LDM-Tree may be arbitrarily worse than that of MUDAN, one of the earliest diffusion auction baselines.

\smallskip

\noindent {\bf Detailed procedure description.} The LDM-Tree mechanism allocates items layer by layer.
When the mechanism is considering allocation of a certain layer, it guarantees the buyers report their neighbours truthfully by removing the buyers below layer $L_i$ who are potential competitors of layer $L_i$. These competitors are divided into two parts. One part is the buyers who diffuse information to potential winners, these buyers can misreport her valuation to take more items from the high-priority layer and get more rewards through resale. The other part contains those buyers who are potential winners. So for each agent $j\in L_i$, the first part is a set of buyers $C_j^P(\theta')$ that contains all children of $j$ who have a  child. Then for another part, in the remaining children of $j$, the buyers who are potential winners need to be removed. If only remove the top-$m$ ranked buyers in $j$’s remaining children according to their valuation report for the first item is not IC. If the agent in $C_j^P(\theta')$ has high valuation but does not report her neighbours, she will be removed in  top-$m$ buyers. The buyers to be replaced will compete
with the high-priority layers to get more items. So the mechanism needs to remove the buyers $C_j^W(\theta')$ which includes the top $m+\mu-|C_j^P(\theta')|$ ranked buyers in $j$’s remaining children where $\mu$ is the upper bound of $|C_i^P(\theta')|$. The children to be removed of agent $j$ is denoted by $R_j$ and the children be removed by layer $L_i$ is denoted by $R_{L_i}$. After removing all these competitors of layer $L_i$, the mechanism performs the optimal allocation in the remaining buyers. If an agent $j\in L_i$ can be allocated $k$ items, she pays them for $SW_{-D_i}-(SW_{-R_{L_i}}-(v_1+\ldots+v_k))$. If $j\in L_i$ cannot be allocated an item, she gets a reward $\SW_{-D_i}-\SW_{-R_{L_i})}$. Then the remaining items are passed to the next layer $L_{i+1}$. See Algorithm~\ref{alg:ldm-tree}.

\begin{example}\label{exa:LDM-Tree}
Let us run LDM-Tree on the same network as in Figure~\ref{fig:single} with $m=1$. In the first round, the mechanism considers only agents in layer 1, and allocates the item to $a$, whereas MUDAN allocates the item to the highest bidder $f$.
Note that this holds when $v_f$ is arbitrarily high.
%
\end{example}

\begin{algorithm}[ht!]
    \caption{LDM-Tree mechanism}
    \label{alg:ldm-tree}
    \begin{algorithmic}[1]
    \State Calculate each agent's layer
    \State Initialise the residual supply $m'\leftarrow m$
    \For{layer $\ell \leftarrow 1,2,\ldots,\ell^{\max}$}
    \State Remove the competitor set $R_\ell$ and calculate the optimal allocation in the remaining buyers denoted by $\pi_i^\ell$
    \For{agent $i\in \ell$}
    \If{$\pi_i^\ell>0$}
    \State Allocate $\pi_i^\ell$ items to $i$ with payment  $$\SW_{-D_i}-(\SW_{-R_{L_i}}-(v_1+\ldots+v_k))$$
    \State Update $m' \leftarrow m'-\pi_i^\ell$
    \Else
    \State Reward $i$ by $\SW_{-D_i}-\SW_{-R_{L_i}}$
    \EndIf
    \EndFor
    \If{$m'=0$}
    \State Break
    \EndIf
    \EndFor
    \end{algorithmic}
\end{algorithm}

\section{The MUDAN mechanism}\label{app:mudan}

\begin{figure}[!htbp]
    \centering
   \includegraphics[width=0.5\columnwidth]{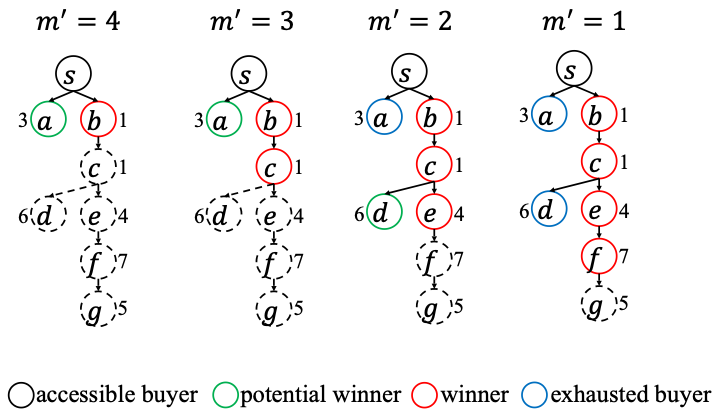}
    \caption{Detailed run-through of MUDAN on the network shown in Figure~\ref{fig:single}
    }
    \label{fig:mudan-detail}
\end{figure}

\medskip

\noindent {\bf Lemma \ref{lem:MUDAN-IR}} {\em  MUDAN satisfies IR, ND, and NW. }

\medskip

\begin{proof}
{\bf IR:} By our payment rule, suppose $w$ is selected as a winner at an iteration. The payment $p_w=v'_{m'+1}$ where $v'_{m'+1}$ is the $(m'+1)$th highest valuation in $A\setminus W$ while $v'_w$ is among the top-$m'$ valuations in $A\setminus W$. If $w$ reports the true valuation, then $v_w =v'_w \geq v_{m'+1}'$, which ensures that the utility $u_w= v_w - p_w \geq v_{m'+1}-v'_{m'+1}=0$. The utility $u_i$ of any other buyer $i \in A \backslash \{w\}$ is $0$.

\smallskip

{\bf ND:} This condition trivially holds as no buyer will have a negative payment (i.e., receiving a reward) in MUDAN.  

\smallskip

{\bf NW:} When the termination condition is met,  $W$ either contains exactly $m$ buyers, or less than $m$ buyers while all explored buyers are winners. Thus NW easily holds.
\end{proof}

\medskip

\noindent {\bf Theorem~\ref{thm:MUDAN}}
{\em MUDAN terminates within time $O(n^2+|E|)$, satisfies IC, IR, ND, NW, and $1/m$-weak efficiency.}

\medskip

\begin{proof}
We only need to prove the time complexity of MUDAN. The other properties are handled by lemmas above.

{\bf Termination.} As described above (in the main paper), the algorithm terminates in exactly $m$ iterations.

{\bf Complexity.} For running time analysis, we describe an implementation of the algorithm: Maintain a sorted list of buyers in $A$.
The buyers in this list are sorted in descending order of their valuations. In this way, the set $\PWin$ can be accessed in constant time. When a node is added in $A$, scan the list and add the node to the appropriate location. With each buyer in $A$, put a label indicating whether the node has been chosen as a winner, i.e., added in $W$. At each iteration, scan the list to find the node in $\PWin\setminus W$ with the highest priority and label it as a winner.

\begin{itemize}
    \item To explore the graph involves potentially checking all edges and nodes. This takes $O(n+|E|)$.
    \item To add all nodes in $A$ involves performing an insertion sort to the set of buyers. This takes $O(n^2)$.
    \item To choose $m$ winners from $\PWin$ involves $m$ iterations, each iteration scans the current list once. This takes another $O(n^2)$.
\end{itemize}
Therefore in summary the algorithm runs in time $O(n^2+|E|)$.
\end{proof}




\section{The MUDAR mechanism}\label{app:mudar}
In this section, we present our MUDAR ({\em Multi-Unit Diffusion Auction with Reward}) mechanism. 
MUDAR also implements the generic mechanism (Alg.~\ref{alg:framework}). The difference between MUDAN lies in how they incentivise winners. Once a winner $w$ is chosen, the MUDAN mechanism commits to allocating $w$ an item, thereby incentiving $w$ to diffuse the auction information. In this way, one may view the allocation result as being determined ``on the fly'' during the graph exploration.
On the other hand,  MUDAR may either allocate an item to $w$ or give $w$ a reward (i.e., a negative payment), which equals to the utility of $w$ had she been allocated an item. The allocation result is determined after the graph exploration is completed, when the buyers' connections are fully revealed.  In this way, MUDAR can identify  buyers that report the $m$-highest valuations globally.  At a given iteration:

\begin{itemize}[leftmargin=*]
    \item {\bf Potential winner set $\PWin$}: A locally accessible buyer $i$ is a {\em potential winner} if her valuation $v'_i$ is among the top-$m$ valuations in $A$.

    \item {\bf Tentative payment $\hat{p}_w$}: For the winner $w$ selected at the given iteration, $\hat{p}_w$ records how much $w \in W$ should pay 
    if she is allocated an item. Set $\hat{p}_w\coloneqq v'_{m+1}$.
\end{itemize}

The exploration terminates when the explored buyers with the top-$m$ valuations are all chosen as winners. This happens only when all nodes are explored (either in $W$ or exhausted). The mechanism then partitions the winner set $W$ into two subsets $\AW$ and $\RW$:
\begin{enumerate}[leftmargin=*]
    \item $\AW$ contains the buyers that have top-$m$ valuations. Each winner $w\in \AW$ is allocated an item. Set $\pi_w\coloneqq 1$ and $p_w\coloneqq \hat{p}_w$.
    \item $\RW$ contains the winners $W\setminus \AW$. They only get a reward but no item. If  $w\in \RW$, the mechanism sets $\pi_w\coloneqq 0$ and gives a reward which equals to her utility had she obtained the item, i.e., $p_w\coloneqq \hat{p}_w - v'_w < 0$.
\end{enumerate}
A run-through example of the MUDAR mechanism is provided in Table~\ref{tab:mudar} and Figure~\ref{fig:mudar-detail}.

\begin{table}
    \begin{center}   
    \caption{Running MUDAR on the network in Fig.~\ref{fig:single} with $m=4$ assuming all buyers report truthfully.  We give priorities to agents by the degree where node with higher degree gets higher priority. `Iter.'  `Incr. to $A$' , `$W$', and `$p$' indicate the iteration number. the nodes to be added to $A$ in each iteration,  winners in descending order of $v_i$, and the payment of the winner in the iteration when she is allocated an item,   respectively. The winners are $b,c,e,f,d,g$ with $\AW=\{d,e,f,g\}$ \& $\RW=\{b,c\}$.}
    \label{tab:mudar}
    \begin{tabular}{|c|c|c|c|c|}
    \hline
    Iter. & Incr. to $A$ & $P$ &  $W$  & $p$ \\
    \hline
    1 & $a,b $ & $\{a,b\}$ & $\{b\}$ &  $p_b=-1$\\
    \hline
    2 & $c$ &$\{a,b,c\}$ & $\{b,c\}$ &  $p_c=-1$\\
    \hline
    3 & $d,e$ &$\{a,c,d,e\}$ & $\{b,c,e\}$ &  $p_e=1$\\
    \hline
    4 & $f$ & $\{a,d,e,f\}$ & $\{b,c,e,f\}$  & $p_f=1$\\
    \hline
    5 & $g$ & $\{d,e,f,g\}$ & $\{b,c,e,f,d\}$ &  $p_d=3$\\
    \hline
    6 & $\varnothing$ & $\{d,e,f,g\}$ & $\{b,c,,e,f,d,g\}$ &$p_g=3$\\
    \hline
    \end{tabular}
    \end{center}
\end{table}

\begin{figure}[!htbp]
    \centering
   \includegraphics[width=0.5\columnwidth]{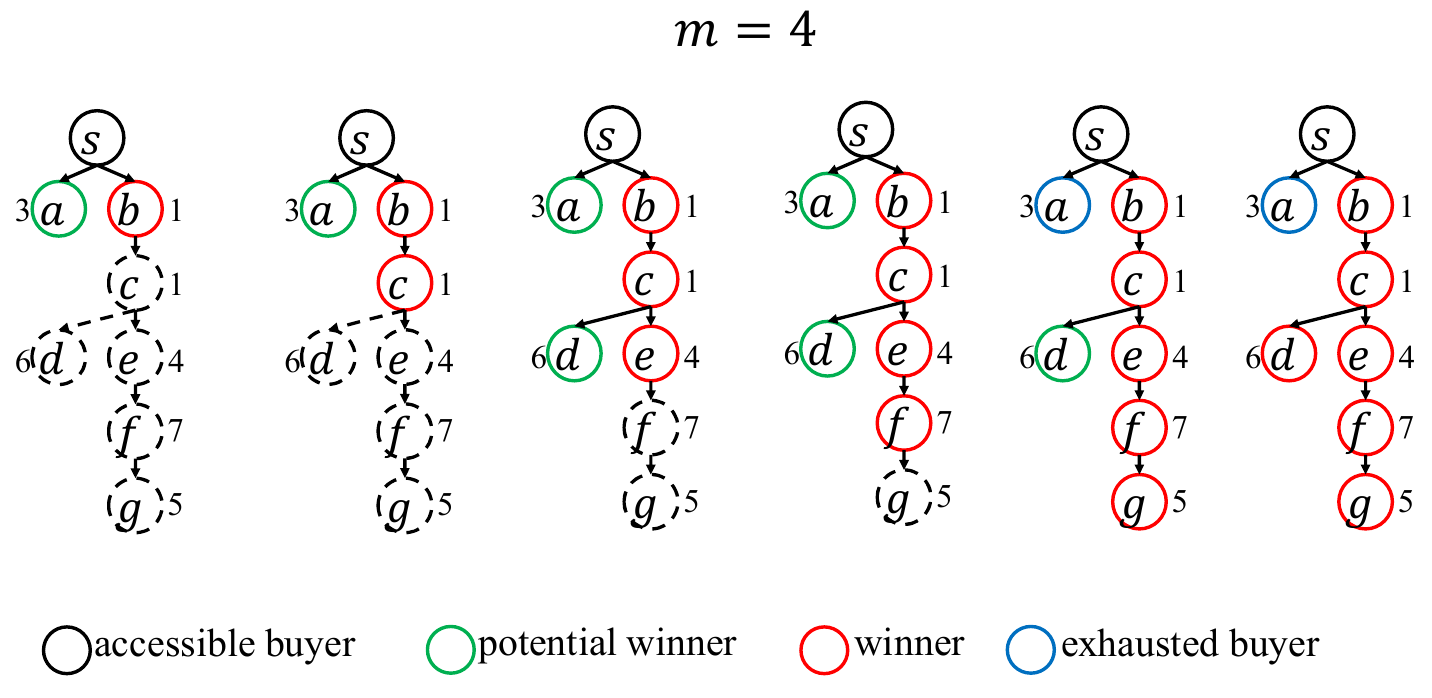}
    \caption{Detailed run-through of MUDAR on the network shown in Figure~\ref{fig:single}
    }
    \label{fig:mudar-detail}
\end{figure}

The next lemma shows that MUDAR has a number of desirable properties. 

\begin{lemma}\label{lem:MUDAR-property}
   MUDAR satisfies IR, ND, and NW.
\end{lemma}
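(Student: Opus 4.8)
The plan is to dispatch IR and NW quickly and then spend essentially all the effort on ND. For IR I would argue by cases on a buyer's final role. A non-winner has $\pi_i=0$ and $p_i=0$, hence utility $0$. Every winner $w$, regardless of whether she ends in $\AW$ or $\RW$, was a potential winner at the iteration she was selected, so her truthfully reported valuation was among the top-$m$ values of the then-current explored set $A$, while $\hat p_w$ was the $(m+1)$-th highest value of that same set; therefore $v_w\ge\hat p_w$. An allocation winner then has utility $v_w-\hat p_w\ge 0$, and a reward winner has utility $-(\hat p_w-v'_w)=v_w-\hat p_w\ge 0$. For NW I would invoke the termination condition: the exploration halts only once every reachable buyer is explored, so $\AW$ is exactly the set of globally top-$m$ buyers and hence $\min\{m,|V_{\theta'}|-1\}$ items are allocated, one per buyer in $\AW$.

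For ND the first step is the algebraic reduction. Since each $w\in\AW$ pays $\hat p_w$ and each $w\in\RW$ pays $\hat p_w-v'_w$, we have
\[
\RV(\theta')=\sum_{w\in\AW}\hat p_w+\sum_{w\in\RW}(\hat p_w-v'_w)=\sum_{w\in W}\hat p_w-\sum_{w\in\RW}v'_w,
\]
so it suffices to prove $\sum_{w\in W}\hat p_w\ge\sum_{w\in\RW}v'_w$. Because the explored set $A$ only grows across iterations and $\hat p_w$ is always the $(m+1)$-th highest valuation of the current $A$, the tentative payments are non-decreasing in selection order, and a committed winner that once leaves the top-$m$ of $A$ can never re-enter it. This means the reward winners are precisely those winners eventually pushed out of the top-$m$ frontier, while the allocation winners are those still on the frontier at termination.

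The heart of the argument is a charging scheme that matches rewards against collected payments. I would define an injective \emph{successor} map $\mathrm{succ}\colon\RW\to W$ sending each reward winner $w$ to the winner $z$ that takes over $w$'s place on the top-$m$ frontier when $w$ is first displaced. The key inequality is $v'_w\le\hat p_{\mathrm{succ}(w)}$: at the moment $z=\mathrm{succ}(w)$ is committed, $w$ already lies outside the top-$m$ of the current $A$, so $v'_w$ is at most the $(m+1)$-th highest value of that set, which is exactly $\hat p_z$. Granting injectivity, this yields
\[
\sum_{w\in\RW}v'_w\le\sum_{w\in\RW}\hat p_{\mathrm{succ}(w)}\le\sum_{u\in W}\hat p_u,
\]
using $\hat p_u\ge 0$, and ND follows.

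I expect the construction and injectivity of $\mathrm{succ}$ to be the principal obstacle. The difficulty is that winners are committed in order of the priority $\sigma$, not of valuation, so allocation and reward winners interleave in no fixed pattern and one cannot naively pair the $j$-th reward winner with the $j$-th allocation winner. To make $\mathrm{succ}$ well defined and injective I would track, iteration by iteration, the evolution of the top-$m$ frontier of $A$ as buyers arrive through the diffusion of exhausted agents and of committed winners, and argue that each displacement of a committed winner off the frontier corresponds to a distinct higher-valued arrival that is itself later committed as a winner. Phrasing this as a slot-based chain decomposition of $W$ (each of the $m$ tentative item slots passing through a time-ordered sequence of occupants ending in an element of $\AW$) is one clean way to guarantee both the pointwise inequality and the disjointness needed for injectivity.
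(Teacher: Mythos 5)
Your handling of IR and NW coincides with the paper's, and your algebraic reduction of ND to the inequality $\sum_{w\in W}\hat p_w\geq\sum_{w\in\RW}v'_w$ is exactly the decomposition the paper uses. The problem is in the charging scheme. As you define it, $\mathrm{succ}(w)$ is ``the winner that takes over $w$'s place on the top-$m$ frontier of $A$ when $w$ is first displaced,'' but the buyer who displaces $w$ from the top-$m$ of $A$ is a newly \emph{explored} buyer, and nothing guarantees it is ever selected as a winner: several buyers can enter $A$ in one iteration, so an entrant can occupy a frontier slot and itself be pushed out by a later, larger entrant before the mechanism ever commits it. For instance, with $m=2$, two committed winners of values $10$ and $9$, a burst of arrivals with values $11$ and $12$, and a subsequent burst with values $20$ and $30$ arriving before the value-$12$ buyer is chosen, the value-$12$ buyer displaces a committed winner yet never joins $W$. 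So $\mathrm{succ}$ need not land in $W$, and when several reward winners are displaced in the same burst the intended matching is not canonically defined either. Your closing claim that ``each displacement corresponds to a distinct higher-valued arrival that is itself later committed as a winner'' is precisely the assertion that fails.

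The paper's proof repairs exactly this point by tracking the top-$m$ frontier of $W$ rather than of $A$. Let $\psi_k$ denote the $(m+1)$-th highest valuation in $W$ just after the $k$-th winner is selected. Since each newly selected winner lies in the top-$m$ of $A$ and $W\subseteq A$, it enters the top-$m$ of $W$ and displaces exactly one earlier winner, whose valuation becomes $\psi_k$; hence the multiset $\{\psi_k\}_{k>m}$ equals the multiset of valuations of $\RW$, while $\hat p_{w_k}\geq\psi_k$ because the $(m+1)$-th highest value of $A$ dominates that of its subset $W$. This is your successor map with the successor forced to be a winner and with injectivity automatic (one displacement per selection). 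Your plan is therefore salvageable and morally identical to the paper's argument, but the construction as written does not go through without this change of frontier.
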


\begin{proof}
{\bf IR:} By our payment rule, for winner $w\in \AW$,  $p_w=v'_{m+1}$ and thus $w$'s utility is $u_w=v_w - p_w=v_w-v'_{m+1}$; for winner $w \in \RW$, the utility $u_w=-p_w = v'_w - v'_{m+1}$. If she reports the true valuation, then $v_w =v'_w \geq v'_{m+1}$, which ensures $u_w\geq 0$. For any other agent $i \in A \setminus W$, $u_i=0$.

\medskip

{\bf NW:} When $|V_{\theta'}-1| \leq m$, the mechanism puts all buyers into set $\AW$ and $\sum_{i \in V_{\theta'} \setminus \{s\}} \pi_i(\theta') = |V_{\theta'}|-1$. When $|V_{\theta'}-1| > m$, the mechanism puts the $m$ buyers with the top-$m$ valuations in $\AW$ and $\sum_{i \in V_{\theta'} \setminus \{s\}}\pi_i(\theta') = m$.

\medskip

{\bf ND:} List all winners in the set $W$ as $w_1,w_2,\ldots, w_{|W|}$ in the order as they are added in $W$. Below we separately discuss $w_k$ for $k\leq m$ and for $k>m$:
\begin{enumerate}[leftmargin=*]
    \item $1\leq k\leq m$: The tentative payment $\hat{p}_{w_k}\geq 0$. If $w_k\in \RW$, the payment $p_{w_k}=\hat{p}_{w_k}-v_{w_k}\geq 0-v_{w_k}$; if $w_k \in \AW$,  $p_{w_k}\geq 0$.
    \item $m< k\leq |W|$: The tentative payment $\hat{p}_{w_k}$ is the $(m+1)$th highest valuation in $A$ at  iteration $k$. Let $\psi_k$ be the $(m+1)$th highest valuation in the set $W$ at iteration $k$. Since $W\subseteq A$, $\hat{p}_{w_k}\geq \psi_k$. If $w_k \in \RW$, the payment is $p_{w_k}=\hat{p}_{w_k}-v_{w_k} \geq \psi_{k}-v_{w_k}$; if agent $w_k\in \AW$, the payment is $p_{w_k}\geq \psi_{k}$.
\end{enumerate}
Summarising the above, the revenue is
\[
\RV(\theta')=\sum_{k=1}^{|W|}p_{w_k}
\]
which equals to
\begin{align*}
&\sum_{k=1}^m \{\hat{p}_{w_k}-v_{w_k}\mid w_k\in \RW\}+ \sum_{k=1}^m \{\hat{p}_{w_k}\mid w_k\in \AW\} + \\
& \sum_{k=m+1}^{|W|}\{\hat{p}_{w_k}-v_{w_k}\mid w_k\in \RW\}+\sum_{k=m+1}^{|W|}\{\hat{p}_{w_k}\mid w_k\in \AW\}\\
\geq &\sum_{k=1}^m \{0-v_{w_k}\mid w_k\in \RW\} +  0 + \\
&\sum_{k=m+1}^{|W|}\{\psi_{k}-v_{w_k}\mid w_k\in \RW\} +\sum_{k=m+1}^{|W|}\{\psi_{k}\mid w_k\in \AW\} \\
= &\sum_{k=m+1}^{|W|}  \psi_{k} - \sum_{w_k \in \RW} v_{w_k} = 0
\end{align*}
The last equation holds because $\psi_k$, $m\leq  k \leq |W|$, coincide with the valuations of $w_k\in \RW$, $1 \leq  k\leq |W|$.
\end{proof}

\paragraph*{\bf Truthfulness.}  MUDAR does not satisfy IC. Indeed, a winner in $\RW$ may report a higher value than the true valuation to receive a higher reward. Nevertheless, we now prove that MUDAR guarantees a weakened form of truthfulness. 

\begin{definition}\label{def:bounded ic}  
For a positive value $\mu>0$, a mechanism $\M$ is {\em $\mu$-bounded incentive compatible} ($\mu$-IC) if no buyer  $i\in B$ can benefit from  either  reporting a lower valuation than the true valuation, or reporting a valuation that is higher than $\mu$: for any $i\in B$, for any global profiles $\theta'$ and $\theta''$ such that $v_i''\geq \mu \geq v_i$ or $v_i''<v_i$,
\[
u_i(\theta_i,\theta_{-i}')\geq u_i(\theta_i'',\theta_{-i}'').\footnote{Similar to the def. of IC in Def.~\ref{def:ic}, $\theta''_{-i}$ is obtained from $\theta'_{-i}$.}
\]
\end{definition}

{\bf Remark.} The notion of $\mu$-bounded incentive compatibility resembles the notion of {\em one-way truthfulness} \cite{goel2014mechanism}, which asserts that no buyer will benefit from bidding lower than the true valuation. Definition~\ref{def:bounded ic} is stronger than one-way truthfulness as in $\mu$-bounded IC, the range of possible valuations at which a buyer $i$ may benefit from misreporting is a bounded interval $[v_i, \mu)$. In particular, if $v_i\geq \mu$, then $\mu$-bounded IC asserts that buyer $i$ will report the true profile.

Let $\mu$ be the $m$th highest valuation among all buyers. 
\begin{lemma}\label{lem:MUDAR-IC}
Suppose that the $m$th highest valuation $\mu$ among all buyers is public. Then MUDAR  achieves $\mu$-IC.
\end{lemma}

\begin{proof}
We first prove that there is no incentive for $w$ to misreport a lower valuation or a higher valuation $>\mu$.
If $w\in \AW$, truthful reporting of her valuation is $w$'s dominant strategy. This can be proved in a similar way as the IC property in Theorem~\ref{lem:MUDAN-IC}.
Then, for $w \in A\backslash W_A $. Consider the three alternative strategies of $w$:
\begin{enumerate}[leftmargin=*]
\item If $w$ reports $v_w' < v_{m+1}' \leq v_w$, then she is exhausted in an iteration and her utility will be $0$.
\item If $w$ reports $v_{m+1}' \leq v_w'<v_w$, then she will receive a lower utility, i.e., $v_w'-v_{m+1}'<v_w-v_{m+1}'$.
\item If she reports $v_{m+1} < v_w < \mu \leq v_w'$, her reported valuation will then be among the top-$m$ valuations. In this case, it must be that $w\in \AW$ and her utility becomes $v_w-v_{m+1}'$. This equals to the utility of truthful reporting.
\end{enumerate}
Then, for $i \in A\backslash W $, consider the three alternative strategies of $w$:
\begin{enumerate}[leftmargin=*]
\item If $i$ reports $v_i' \leq v_i$, then she is still exhausted in an iteration and her utility is still $0$.
\item If she reports $v_i\leq v_{m+1}< \mu \leq v_i'$, her reported valuation will then be among the top-$m$ valuations. In this case, it must be that $w\in \AW$ and her utility becomes $v_i-v_{m+1}'\leq 0$. This makes her a non-positive utility.
\end{enumerate}

Last, using essentially the same proof as that for MUDAN in Lemma~\ref{lem:MUDAN-IC}, one can prove that no buyer can benefit from misreporting her connections.
\end{proof}

\paragraph*{\bf Social welfare.} The algorithm of MUDAR terminates when all buyers are explored. Under the allocation and payment rules of MUDAR, the buyers form an equilibrium such that applying MUDAR leads to an efficient allocation. 


\begin{theorem}\label{thm:MUDAR}
Suppose that the $m$th highest valuation $\mu$ among all buyers is public. Then MUDAR terminates within time $O(n^2+|E|)$, satisfies $\mu$-IC, IR, ND, NW, and efficiency. 
\end{theorem}

\begin{proof}
{\bf Complexity.} The time complexity of the algorithm can be proved in the same way as for Theorem~\ref{thm:MUDAN}. We also maintain a sorted list to store the explored nodes whose first $m$ elements are those in $\PWin$, and label a node if it is added in $W$. In this implementation, the algorithm runs in $O(n^2+|E|)$.

%

{\bf Efficiency.} For the efficiency condition, note that when the termination condition is met, all buyers are explored. Next we show that in MUDAR these buyers form an equilibrium, which is efficient.

The proof of Lemma~\ref{lem:MUDAR-IC} has shown that given the value $\mu$, for each agent $w$ with valuation $v_i\geq \mu$, truthful reporting $v_w'=v_w$ is the dominant strategy.

While for each agent $i$ with valuation $v_i<\mu$, there are three cases:
\begin{enumerate}[leftmargin=*]
\item If $i$ reports a valuation such that she losses the auction, her utility is $0$.
\item If $i$ reports a valuation $v_i'>\mu$ such that she is an allocation winner, her utility is $u_i=v_i-\hat{p}_w\leq 0$.
\item If $i$ reports a valuation such that she is an reward winner, her utility is $u_i=v_i-(\hat{p}_w-v_i')\geq 0$. Further, $u_i$ is maximised when $v_i'=\mu$.
Therefore, reporting a valuation $v_i'=\mu$ is $i$'s dominant strategy.
\end{enumerate}
Then an equilibrium is formed where each agent $w$ with valuation $v_i\geq \mu$ truthfully reports and each other agent with valuation $v_i< \mu$ reports $v_i'=\mu$. In this equilibrium, MUDAR can selects the buyers with the top-$m$ true valuations as the winners, which shows the efficiency of MUDAR.  

\end{proof}

\section{Multi-demand multi-unit auction}\label{app:multi}

\paragraph*{\bf The MUDAN-$m$ mechanism.} 
Here, we give the definitions, statements and proofs regarding MUDAN-$m$ we omit in the main paper.   

\begin{lemma}\label{lem:multi}
   The following statements hold:
   \begin{enumerate}[leftmargin=*]
       \item For each buyer $i\in B$, the utility of $i$ received from $\M$ equals the sum of utilities of $i_1,\ldots,i_m$ received from $\widetilde{\M}$, i.e., $u_i(\eta')=\sum_{j=1}^m u_{i_j}(\theta')$.
       \item The social welfare $\SW(\eta')$ of $\M$ equals the social welfare $\SW(\theta')$ of $\widetilde{\M}$.
       \item The optimal social welfare on $B$ equals the optimal social welfare on $\widetilde{B}$.
       \item The revenue $\RV(\eta')$ of $\M$ equals the revenue $\RV(\theta')$ of $\widetilde{\M}$.
       \item The number of items allocated by $\M$ equals the number of items allocated by $\widetilde{\M}$.   \QED
   \end{enumerate}
\end{lemma}

In the next definition, let $\eta'_{-i}\coloneqq (\eta'_1,\ldots,\eta'_{i-1},\eta'_{i+1}, \ldots,$ $\eta'_n)$ denote the profiles of all buyers but $i$.
\begin{definition} \label{def:ic-m}
Let $\M$ be a mechanism.
\begin{enumerate}[leftmargin=*]
\item $\M$ is {\em incentive compatible} (IC) if for any buyer reporting truthfully is a dominant strategy: for all $i\in B$, all global profiles $\eta'$ and $\eta''$, we have $u_i(\eta_i, \eta_{-i}')\geq u_i(\eta_i'',\eta_{-i}'')$ \footnote{As $t_i''$ may be different from $t_i'$, some agents $j$ who are reachable in $G_{\eta'}$ may become unreachable had we replace $\eta'_i$ with $\eta''_i$. $\eta''_{-i}$ is obtained from $\eta'_{-i}$ by replacing $\eta'_j$ with the silent profile for all such agents $j$.}.
\item $\M$ is {\em individually rational} (IR) if any buyer by reporting truthfully receives  non-negative utility, i.e., for all $i\in B$, $u_i(\eta)\geq 0$.
\item $\M$ is {\em non-deficit} (ND) if for any global profile $\eta'$, the revenue is non-negative, i.e., $\RV(\eta')\geq 0$.
\item $\M$ is {\em non-wasteful} (NW) if all items are allocated to buyers (up to the number of reachable buyers), i.e., for any global profile $\eta'$, $\sum_{i \in V_{\eta'} \backslash \{s\}} \sum_{j=1}^m \pi_{i,j}(\eta') = \min\{m, m(|V_{\eta'}|-1)\}$.
\item $\M$ is {\em efficient}  if it achieves optimal social welfare, i.e., for any $\eta'$, $\SW(\eta')=\SW_{\opt}$.
\end{enumerate}
\end{definition}

\medskip

\noindent {\bf Lemma~\ref{lem:mudan-m-ic}} {\it The MUDAN-$m$ mechanism is IC.}

\medskip

\begin{proof}
We first prove that no buyer in $B$ can benefit from misreporting her valuation vector.
To run $\M$, we first execute the MUDAN mechanism $\widetilde{\M}$ given the constructed profile $\theta'$ on $\widetilde{B}$. Now consider the execution of $\widetilde{\M}$.

If $i_1$ is selected as a winner, then $\widetilde{\M}$ will proceed to select more $i_j$ as winners until the valuation is so low that $i_j$ is not added in $\PWin$. That is because after $i_j$ is chosen as winner, the only buyer in $\widetilde{B}$ that is added to $A$ is $i_{j+1}$. By definition of the priority on $B$, if $i_{j+1}$ is added to $\PWin$, then $i_{j+1}$ will be the next node with the highest priority and is thus chosen  as the winner.

Consider a given iteration of $\widetilde{\M}$. By the argument above, if a buyer $i_j\in \PWin$ is not chosen as the winner, then $j=1$. Moreover, if $i_j$ is exhausted, then $i_{\ell}$ for all $\ell>j$ are exhausted also due to the diminishing valuation. This means that for any buyer $i\in B$, either $i_1\in \PWin$, or $\{i_1,\ldots,i_m\}\cap \PWin=\varnothing$. This observation is crucial for the proof of truthfulness.

Now suppose in the given iteration, $w_j$ is chosen as the winner. Suppose $w\in B$ misreports her valuation vector so that $\vec{v'_w}\neq \vec{v_w}$. We show that this strategy will not give $w$ extra utility at this iteration:
\begin{enumerate}[leftmargin=0.6cm]
    \item[(a)] Suppose $v_{w,\ell}\neq v'_{w,\ell}$ where $\ell>j$. At this iteration, $w_\ell$ would {\em not} have been added in $A$. Therefore for $w$, misreporting $v_{w,\ell}$ does not affect the tentative payment $\hat{p}(w_j)$ for her $j$th item. 
    \item[(b)] Suppose $v_{w,k}\neq v'_{w,k}$ where $k<j$. Note that $w_k$ must have been a winner in an earlier iteration. So the valuation of $w_k$ is not taken into consideration when the algorithm sets the tentative payment $\hat{p}(w_j)$. 
    \item[(c)] Now suppose $w$ misreports the valuation of $v_{w,j}$ in such a way that $v'_{w,j}>v_{w,j}>\hat{p_{w_j}}$ or $v_{w,j}>v'_{w,j}>\hat{p_{w_j}}$. Here, $\hat{p_{w_j}}$ is the tentative payment of $w_j$ assuming $\tilde{\M}$ sets $w_j$ as the winner. As this does not change $w_j$'s priority, $w$ is still chosen as the winner in this iteration and is allocated her $j$th item with payment $\hat{p_{w_j}}$. This item will add $w$ a utility of $u_{w_j}((v'_{w_j},r_{w_j}), \theta_{-w_j})=u_{w_j}((v_{w_j},r_{w_j}), \theta_{-w_j})$.
    \item[(d)] On the other hand, if  $v'_{w,j}<\hat{p_{w_j}}<v_{w,j}$, then $w$ loses the $j$th item as all $w_{\ell}$ where $\ell\geq j$ are exhausted, giving her no extra utility.
\end{enumerate}
Summarising (a)-(d) above, in the given iteration, misreporting any element in valuation vector $(v_{w,1},\ldots,v_{w,m})$ will not give $w$ extra utility.

\smallskip

Next, we prove that for buyer $i\neq w$ where $i_1\in \PWin$, $i$ cannot benefit from misreporting her valuation using a similar argument:
    \begin{enumerate}[leftmargin=0.6cm]
        \item[(a)] Consider $j>1$. The node $i_j$ would not have been added in $A$. Therefore for $i$, misreporting $v_{i,j}$ cannot lead to a higher priority nor a less payment.
        \item[(b)] Suppose $i$ misreports her valuation such that $v'_{i_1}<v'_{m'+1}<v_{i_1}$. Recall that we used $v'_{m'+1}$ in MUDAN to denote the $m'+1$th highest valuation in this iteration, where $m'$ has the same meaning as defined in Sec.~\ref{sec:mudan}. Then $i_1$ would not be a potential winner, but rather be exhausted. This means that no item would be allocated to $i$ and her utility would be $0$.
        \item[(c)] Suppose $i$ misreports her valuation such that $v'_{i_1}>v_{i_1}>v'_{m'+1}$ or $v_{i_1}>v'_{i_1}>v'_{m'+1}$. Then $i_1$ would still be a potential winner in this iteration, giving her no extra utility.
    \end{enumerate}
Summarising (a)-(c) above, in the given iteration, misreporting any (combination of) elements in the valuation vector $(v_{i,1},\ldots,v_{i,m})$ will not give $i$ extra utility.

\smallskip

Last, consider buyer $i\neq w$ where $\{i_1,\ldots,i_m\}\cap \PWin=\varnothing$. We prove that $i$ cannot benefit from misreporting her valuation again using a similar argument:
    \begin{enumerate}[leftmargin=0.6cm]
        \item[(a)] Consider $j>1$. The node $i_j$ would not have been added in $A$. Therefore for $i$, misreporting $v_{i_j}$ cannot lead to a higher priority nor a less payment.
        \item[(b)] Suppose $i$ misreports her valuation such that  $v_{i_1} < v'_{m'} \leq v_{i_1}'$. Then we have: (i) If $i_1$ has the highest priority, then her utility of this item is $u_{i_1}((v_{i_1}',r_{i_1}),\theta_{-i_1})=v_{i_1}-v'_{m'+1} < 0 = u_{i_1}((v_{i_1},r_{i_1}),\theta_{-i_1})$. (ii) Otherwise, her utility remains $0$.
        \item[(c)] Suppose $i$ misreports her valuation such that $v_{i_1}<v_{i_1}'<v'_{m'}$ or $v'_{i_1}<v_{i_1} < v'_{m'}$. Her utility remains $0$ and all $i_j$ where $j\geq 1$ are exhausted.
    \end{enumerate}
Summarising (a)-(c) above, in the given iteration, misreporting any element in the valuation vector $(v_{i,1},\ldots,v_{i,m})$ will not give $i$ extra utility. This we proved the fact for all iterations, no buyer in $B$ has incentive to misreport their valuation vector.


It remains to prove that no buyer in $B$ can benefit from misreporting her neighbour set. Our argument is the following: For each agent $i_j\in \widetilde{B}$, her priority cannot increase when she hide any of her neighbours. And her neighbours is only added in $A$ either when $i_m$ is chosen as a winner or when $i_m$ is exhausted so that her neighbours cannot influence her allocation. The rest of the proof is the same as in the proof of Lemma~\ref{lem:MUDAN-IC}.
\end{proof}

For analysing the social welfare of MUDAN-$m$, we make the following definition:
\begin{itemize}
    \item Suppose $w^*_j\in \widetilde{B}$ is the winner chosen by the MUDAN mechanism $\hat{\M}$ in the last iteration. We say that $w^*$ is {\em critical} for a buyer $i$ if all paths from $s$ to $i$ in $G_{\eta'}$ pass through $w^*$.
\item Let $B^*\subseteq B$ denote the set of buyers for whom $w^*$ is not critical.
\item The {\em weakly-optimal social welfare} $\SW_{\wopt}$ denote the sum of the top-$m$ valuations among $\{v_{i,j}\}_{i\in B^*, 1\leq j\leq m}$.
\item  A mechanism $\M$ is {\em $\epsilon$-weakly efficient} if for any global profile $\eta'$, we have $\SW(\eta') \geq \epsilon \SW_{\wopt}$.
\end{itemize}

\medskip

\noindent {\bf Theorem~\ref{thm:mudan-m}} {\em  MUDAN-$m$ is IC, IR, ND, NW, and $1/m$-weakly efficient.}

\medskip

\begin{proof} IC follows from Lemma~\ref{lem:mudan-m-ic}, IR, ND, and NW follow from Lemma~\ref{lem:multi-property} and Theorem~\ref{thm:MUDAN}.
 $1/m$-weak efficiency follows directly from the fact that $\widetilde{\M}$ selects the buyer $i_1$ who has the highest valuation $v_{i_1}$ among the explored buyers.
 \end{proof}

\paragraph*{\bf The MUDAR-$m$ mechanism.} For analysing the truthfulness of MUDAR-$m$, we make the following definition:

For a positive value $\mu>0$, a mechanism $\M$ is {\em $\mu$-bounded incentive compatible} ($\mu$-IC) if for any buyer  $i\in B$, for any $1\leq j\leq m$, $i$ cannot benefit from either reporting a lower valuation for her $j$th item, or reporting a valuation that is higher than $\mu$, i.e., for any $i\in B$, for any $1\leq j\leq m$, for any global profiles $\eta'$ and $\eta''$ such that $v_{i,j}''\geq \mu \geq v_{i,j}$ or $v_{i,j}''<v_{i,j}$,
\[
u_i(\eta_i,\eta_{-i}')\geq u_i(\eta_i'',\eta_{-i}'').\footnote{Similar to the def. of IC in Def.~\ref{def:ic}, $\eta''_{-i}$ is obtained from $\eta'_{-i}$.}
\]

We first show that MUDAR-$m$ satisfies the same truthfulness condition as MUDAR.

\begin{lemma}\label{lem:mudar-m-ic}
Suppose that the $m$th highest valuation $\mu$ among $v_{i,j}$ for all $1\leq i\leq n$, $1\leq j\leq m$ is public. Then MUDAR-$m$ mechanism satisfies $\mu$-IC. 
\end{lemma}

\begin{proof}
    Firstly we prove that no agent $i\in B$ can benefit from misreporting her valuation vector in such a way that some of its elements $v'_{i,j}$ are lower than the true valuation $v_{i,j}$, or higher than $\mu$. In other words, suppose $v'_{i,j}<v_{i,j}$ or $v'_{i,j}>\mu$ for some $1\leq j\leq m$, then the utility that $i$ receives is strictly less than as if she reports $\vec{v_i}$ truthfully.

    Just like in the proof of Lemma~\ref{lem:mudan-m-ic}, in the following we analyse the execution of the $\widetilde{\M}$ mechanism over buyers $\widetilde{B}$. Consider a given iteration of $\widetilde{\M}$.

    Let $w_j$ be the chosen winner by $\widetilde{\M}$. Suppose $w_j$ is selected as an allocation winner. Suppose $w\in B$ misreports her valuation vector so that $\vec{v'_w}\neq \vec{v_w}$. We show that this strategy will not give $w$ extra utility at this iteration:
    \begin{enumerate}[leftmargin=0.6cm]
        \item[(a)] Suppose $v_{w,\ell}\neq v'_{w,\ell}$ where $\ell>j$. In this iteration, $w_l$ would not have been added in $A$. Therefore for $w$, misreporting $v_{w,\ell}$ does not affect the tentative payment $\hat{p}(w_j)$.
        \item[(b)] Suppose $v_{w,k}\neq v'_{w,k}$ where $k<j$. Note that $w_k$ must have been an allocation winner in an earlier iteration. We have $v'_{w,k}>v'_{w,j}>\hat{p}(w_j)$, so that $v'_{w,k}$ cannot be tentative payment of $w_j$. Therefore for $w$, misreporting $v_{w,\ell}$ does not affect the tentative payment $\hat{p}(w_j)$.
        \item[(c)] Now suppose $w$ misreports the valuation of $v_{w,j}$ in such a way that $v'_{w_j}>v_{w_j}>\hat{p}(w_j)$ or $v_{w_j}>v'_{w_j}>\mu$, where $\hat{p}(w_j)$ is the tentative payment of $w_j$. As this does not change $w_j$'s priority, $w$ is still chosen as the winner in this iteration and is allocated her $j$th item and with payment $\hat{p}(w_j)$. This item will add $w$ a utility of $u_{w_j}((v'_{w_j},r_{w_j}),\theta_{-w_j})=u_{w,j}((v'_{w,j},r_{w_j}),\theta_{-w_j})$.
        \item[(d)] Suppose $w$ misreports her valuation such that  $v_{w_j}>\mu>v'_{w_j}>\hat{p}(w_j)$, then $w_j$ becomes a reward winner in this iteration. This iteration gives $w$ a lower extra utility of $u_{w_j}((v'_{w_j},r_{w_j}),\theta_{-w_j})=v'_{i,j}-\hat{p}(w_j)\leq v_{i,j}-\hat{p}(w_j)=
        u_{w_j}((v'_{w,j},r_{w_j}),\theta_{-w_j})$.
        \item[(e)] On the other hand, if $w$ misreports her valuation such that $v'_{w_j}<\hat{p}(w_j)<v_{w,j}$, then $w$ loses her $j$th item as all $w_\ell$ where $\ell>j$ are exhausted, giving her no extra utility.
    \end{enumerate}

    Summarising (a)-(e) above, in the given iteration, misreporting any element in the valuation vector $(v_{w,1},\ldots,v_{w,m})$ will not give $w$ extra utility.

    Next, consider buyer $w_j$ is selected as a reward winner. We prove that $w$ cannot benefit from misreporting any element of her valuation vector such that $v'_{w,j}<v_{w,j}$ or $v'_{w,j}>\mu$:
    \begin{enumerate}[leftmargin=0.6cm]
        \item[(a)] Suppose$v_{w,\ell}\neq v'_{w,\ell}$ where $\ell>j$. In this iteration, $w_l$ would not have been added in $A$. Therefore for $w$, misreporting $v_{w,\ell}$ does not affect the tentative payment $\hat{p}(w_j)$.
        \item[(b)] Suppose $v_{w,k}\neq v'_{w,k}$ where $k<j$. Note that $w_k$ must have been a winner in an earlier iteration. We have $v'_{w,k}>v'_{w,j}>\hat{p}(w_j)$, so that $v'_{w,k}$ cannot be tentative payment of $w_j$. Therefore for $w$, misreporting $v_{w,\ell}$ does not affect the tentative payment $\hat{p}(w_j)$.
        \item[(c)] Now suppose $w$ misreports the valuation of $v_{w,j}$ in such a way that $v'_{w_j}>\mu>v_{w_j}>\hat{p}(w_j)$, then $w_j$ is selected as an allocation winner. This iteration gives $w$ an extra utility of $v_{w,j}-\hat{p}(w_j)$. This equals to the utility of truthful reporting.
        \item[(d)] On the other hand, if $w$ misreports her valuation such that $v'_{w_j}<\hat{p}(w_j)<v_{w_j}$, then $w$ loses her reward of the $j$'s item as all $w_\ell$ where $\ell>j$ are exhausted, giving her no extra utility.
    \end{enumerate}

    Summarising (a)-(d) above, in the given iteration, misreporting any element in the valuation vector $(v_{w,1},\ldots,v_{w,m})$ will not give $w$ extra utility.
   Next, consider buyer $i\neq w$ where $i_1\in \PWin$. We prove that $i$ cannot benefit from misreporting her valuation using a similar argument as above:
    \begin{enumerate}[leftmargin=0.6cm]
        \item[(a)] Consider $j>1$. The node $i_j$ would not have been added in $A$. Therefore for $i$, misreporting $v_{i,j}$ cannot lead to a higher priority nor a less payment.
        \item[(b)] Suppose $i$ misreports her valuation such that $v'_{i_1}<v'_{m'+1}<v_{i_1}$. Then $i_1$ would not be a potential winner, but rather be exhausted. This means that no item would be allocated to $i$ and her utility would be $0$.
        \item[(c)] Suppose $i$ misreports her valuation such that $v'_{i_1}>v_{i_1}>v'_{m'+1}$ or $v_{i_1}>v'_{i_1}>v'_{m'+1}$. Then $i_1$ would still be a potential winner in this iteration, giving her no extra utility.
    \end{enumerate}
Summarising (a)-(c) above, in the given iteration, misreporting any (combination of) elements in the valuation vector $(v_{i,1},\ldots,v_{i,m})$ will not give $i$ extra utility.

\smallskip

Last, consider buyer $i\neq w$ where $\{i_1,\ldots,i_m\}\cap \PWin=\varnothing$. We prove that $i$ cannot benefit from misreporting her valuation again using a similar argument:
    \begin{enumerate}[leftmargin=0.6cm]
        \item[(a)] Consider $j>1$. The node $i_j$ would not have been added in $A$. Therefore for $i$, misreporting $v_{i_j}$ cannot lead to a higher priority nor a less payment.
        \item[(b)] Suppose $i$ misreports her valuation such that  $v_{i_1} < v'_{m'} \leq v_{i_1}'$; Recall that we used $v_{m'}$ in MUDAN to denote the $m'$th highest valuation in this iteration, where $m'$ has the same meaning as defined in Sec.~\ref{sec:mudan}. Then we have: (i) If $i_1$ has the highest priority, then her utility of this item is $u_{i_1}((v_{i_1}',r_{i_1}),\theta_{-i_1})=v_{i_1}-v'_{m'+1} < 0 = u_{i_1}((v_{i_1},r_{i_1}),\theta_{-i_1})$. (ii) Otherwise, her utility remains $0$.
        \item[(c)] Suppose $i$ misreports her valuation such that $v_{i_1}<v_{i_1}'<v'_{m'}$ or $v'_{i_1}<v_{i_1} < v'_{m'}$. Her utility remains $0$ and all $i_j$ where $j\geq 1$ are exhausted.
    \end{enumerate}
Summarising (a)-(c) above, in the given iteration, misreporting any element in the valuation vector $(v_{i,1},\ldots,v_{i,m})$ will not give $i$ extra utility. This we proved the fact for all iterations, no buyer in $B$ has incentive to misreport their valuation vector.

\medskip

It remains to prove that no buyer in $B$ can benefit from misreporting her neighbour set. Our argument is the following: For each agent $i_j\in \widetilde{B}$, her priority cannot increase when she hide any of her neighbours. And her neighbours is only added in $A$ either when $i_m$ is chosen as a winner or when $i_m$ is exhausted so that her neighbours cannot influence her allocation. The rest of the proof is the same as in the proof of Lemma~\ref{lem:MUDAN-IC}.
\end{proof}

\begin{theorem}\label{thm:mudar-m}
Suppose that the $m$th highest valuation $\mu$ among $v_{i,j}$ for all $1\leq i\leq n$, $1\leq j\leq m$ is public. Then MUDAR-$m$ is $\mu$-IC, IR, ND, NW, and efficient. 
\end{theorem}

\begin{proof}
IR, ND, NW and efficiency follow directly from Lemma~\ref{lem:multi-property} and Theorem~\ref{thm:MUDAR}. $\mu$-IC follows from Lemma~\ref{lem:mudar-m-ic}.
\end{proof}

\end{document}